\newtheorem{thm}{Theorem}[section]
\newtheorem{lem}[thm]{Lemma}
\theoremstyle{definition}
\newtheorem{exam}[thm]{Example}
\theoremstyle{remark}
\newtheorem{rem}[thm]{Remark}
\numberwithin{equation}{section}
\newcommand{\RR}{\mathbb{R}}                
\newcommand{\Sym}{\mathbb{S}}               
\newcommand{\espace}{\mathcal{E}}           
\newcommand{\body}{\mathcal{B}}             
\newcommand{\Emb}{\mathrm{Emb}^{\infty}}    
\newcommand{\Cinf}{\mathrm{C}^{\infty}}
\newcommand{\norm}[1]{\left\Vert#1\right\Vert}
\newcommand{\set}[1]{\left\{#1\right\}}
\newcommand{\vol}{\mathrm{vol}}             
\newcommand{\Ric}{\mathbf{Ric}}             
\newcommand{\Ein}{\mathbf{Ein}}             
\newcommand{\rd}{\mathrm{d}}
\newcommand{\rdx}{T}
\newcommand{\pp}{\mathrm{p}}
\newcommand{\bA}{\mathbf{A}}
\newcommand{\bB}{\mathbf{B}}
\newcommand{\bC}{\mathbf{C}}
\newcommand{\bF}{\mathbf{F}}
\newcommand{\bI}{\mathbf{I}}
\newcommand{\Jmat}{\bm{J}^{\text{mat}}}
\newcommand{\bK}{\mathbf{K}}
\newcommand{\MM}{\bm{M}}
\newcommand{\NN}{\bm{N}}
\newcommand{\bP}{\mathbf{P}}
\newcommand{\UU}{\bm{U}}
\newcommand{\bV}{\mathbf{V}}
\newcommand{\VV}{\bm{V}}
\newcommand{\bX}{\mathbf{X}}
\newcommand{\ba}{\mathbf{a}}
\newcommand{\ff}{\bm{f}}
\newcommand{\bg}{\mathbf{g}}
\newcommand{\nn}{{\bm{n}}}
\newcommand{\bq}{\mathbf{q}}
\newcommand{\uu}{\bm{u}}
\newcommand{\bp}{\bm{p}}
\newcommand{\xx}{\mathbf{x}}
\newcommand{\mH}{\mathcal{H}}
\newcommand{\mL}{\mathcal{L}}
\newcommand{\mM}{\mathcal{M}}
\newcommand{\mW}{\mathcal{W}}
\newcommand{\bgamma}{\bm{\gamma}}
\newcommand{\btau}{\bm{\tau}}
\newcommand{\mg}{{\mathfrak g}}
\newcommand{\id}{\mathrm{id}}
\newcommand{\Id}{\mathbf{I}}
\newcommand{\Idt}{\mathbf{I}_{3}}
\newcommand{\Idq}{\mathbf{I}_{4}}
\newcommand{\bEsh}{\hat{\pmb{\mathbb{B}}}}
\newcommand{\Eshcomp}{\mathbb{B}}
\newcommand{\bMan}{\hat{\pmb{\mathbb{M}}}}
\newcommand{\bPK}{\hat{\pmb{\mathbb{P}}}}
\newcommand{\bTN}[1]{\hat{\pmb{\mathbb{T}}}{}_{#1}}
\newcommand{\TNcomp}[1]{{\mathbb{T}_{#1}}{}}
\newcommand{\bTH}{\pmb{\mathbb{T}}{}^{\text{H}}}
\newcommand{\dd}[2]{\frac{\rd{#1}}{\rd{#2}}}
\newcommand{\pd}[2]{\frac{\partial{#1}}{\partial{#2}}}
\DeclareMathOperator{\Lie}{L} %
\DeclareMathOperator{\tr}{tr} %
\DeclareMathOperator{\grad}{grad} %
\DeclareMathOperator{\gradg}{grad^{\textit{g}}} %
\DeclareMathOperator{\dive}{div} %
\begin{document}

\title{Classical and relativistic balance of configurational forces}

\author{R. Desmorat}
\address[Rodrigue Desmorat]{Université Paris-Saclay, CentraleSupélec, ENS Paris-Saclay, CNRS, LMPS - Laboratoire de Mécanique Paris-Saclay, 91190, Gif-sur-Yvette, France}
\email{rodrigue.desmorat@ens-paris-saclay.fr}

\author{A. Gravouil}
\address[Anthony Gravouil]{INSA Lyon, CNRS, LaMCoS - Laboratoire Mécanique des Contacts et des Structures, UMR5259, 69621 Villeurbanne, France}
\email{Anthony.Gravouil@insa-lyon.fr}

\author{B. Kolev}
\address[Boris Kolev]{Université Paris-Saclay, CentraleSupélec, ENS Paris-Saclay, CNRS, LMPS - Laboratoire de Mécanique Paris-Saclay, 91190, Gif-sur-Yvette, France}
\email{boris.kolev@math.cnrs.fr}

\date{December 19, 2025}%
\subjclass[2020]{74B20, 70G45, 83C10, 83C25}
\keywords{Configurational forces, Constitutive equations, Relativistic Hyperelasticity, Lagrangian formulation of General Relativity}%

\begin{abstract}

  This article develops a unified variational framework for configurational (or material) forces in both Classical (3D, non-relativistic) and Relativistic (4D) Continuum Mechanics. Configurational forces describe the evolution of material defects—such as cracks, dislocations, and interfaces—which move relative to the material rather than through physical space. In the classical setting of hyperelasticity, the authors revisit the balance of configurational forces using an intrinsic Lagrangian formulation, where the material body is modeled as an abstract three-dimensional manifold. By treating the reference configuration as a variable and performing a Lagrangian variation with respect to it, they show that the configurational forces balance naturally emerges. Importantly, this balance equation is not independent: it is equivalent to the standard balance of linear momentum combined with constitutive relations, and it is expressed through the Eshelby stress tensor on the reference configuration. The framework is then extended to Relativistic Hyperelasticity within General Relativity. Matter is described by a matter field, a vector valued function, defined on the four-dimensional Universe, and the Lagrangian (\emph{i.e.}, Action) includes both matter and gravitational contributions. Two stress–energy tensors arise: the Noether stress–energy tensor (from variations with respect to the matter field) and the Hilbert stress–energy tensor (from variations with respect to the Universe metric). Assuming General Covariance, the authors prove that these tensors and their associated balance laws are equivalent.
  By introducing the notion of an observer and specializing to static spacetimes, the authors define a relativistic generalization of the deformation and derive a four-dimensional Eshelby tensor. They show that in Special Relativity, as in Classical Continuum Mechanics, the relativistic configurational forces balance is not a new equation but follows from the conservation laws of the Noether stress–energy tensor. Finally, they recover the classical configurational forces balance as the non-relativistic limit of the relativistic theory.
  Overall, the paper provides a rigorous geometric and variational interpretation of configurational forces, unifying classical and relativistic formulations and clarifying their deep connection with standard equilibrium equations.

\end{abstract}

\maketitle

\setcounter{tocdepth}{2}
\tableofcontents

\section*{Introduction}

Configurational forces, also referred to as material forces, emerge in problems involving the evolution of defects within a material. Unlike standard (Newtonian or mechanical) forces, which govern the motion of material particles in physical space, configurational forces govern the motion of entities that migrate relative to the material itself. Typical examples include dislocations, cracks, inclusions, voids, vacancies, and evolving interfaces.
The notion of configurational forces was first introduced in the framework of elasticity and Continuum Mechanics by Eshelby~\cite{Esh1970,Esh1975}. Since then, several theoretical frameworks have been developed  to clarify their physical interpretation and to establish the corresponding balance equations. Historically, one may cite the pull-back approach \cite{Mau1993,KM2001,Mau2002,Mau2002a}, in which configurational force balance is viewed as the projection of mechanical force balance equations onto the material manifold. In this context, configurational forces are closely linked to material uniformity and homogeneity \cite{TN1965,Nol1967} and are interpreted as the driving forces associated with continuous distributions of inhomogeneities \cite{EM1990a,EM1996,Eps2002,ME1998}.
Another notable framework is the basic primitive objects approach proposed by Gurtin \cite{Gur2000,Gur1999,KD2002}, where configurational forces are postulated as fundamental physical quantities, independent of mechanical forces, and their balance laws are obtained through invariance principles. The Noether's theorem approach \cite{KS1971,LMOW2004,MH1994} interprets configurational forces as the conserved quantities associated with material translational invariance of the Lagrangian density. The inverse motion approach \cite{TM+1991,Mau1993,Pod2001,Ste2002} derives configurational force balance equations from the stationarity of the energy (or Action) functional with respect to variations of the reference configuration while keeping the current configuration fixed. Closely related to these ideas is the variational approach \cite{KAS2004,LMOW2004}, which introduces, in addition to the material and spatial configurations, an auxiliary configuration—often called the parameter configuration—that serves as a fixed reference for the motion of defects relative to the material manifold, in analogy with the role of the material configuration for the motion of particles in space \cite{MR1999,YMO2006}.

In classical, \emph{i.e.}, three-dimensional and non-relativistic hyperelastodynamics, the so-called configurational forces balance is not a new conservation equation. By pullback onto the reference configuration $\Omega_{0}$, for example by means of Piola and Ericksen identities  \cite{Eri1977,Rog1977,MT1992}, the configurational forces balance is a reformulation  of the  balance of linear momentum combined with the (hyperelasticity) constitutive equations. This combination implies  that both kinematic energy and strain energy densities enters the final expression of the Eshelby stress tensor.

As mentioned,  there are several ways to derive the configurational forces balance \cite{MT1992,Mau1993,Gur1995,KM2001,SSD2009}. We revisit in \autoref{sec:classical-elastodynamics} the classical variational approaches, and provide a rigorous and geometric interpretation of the so-called horizontal variation introduced in \cite{ZOM2008}. We use the Intrinsic Lagrangian Formulation of Classical Continuum Mechanics by Noll \cite{Nol1972,Nol1978} and Rougée \cite{Rou1980,Rou1991a,Rou1991,Rou2006}, in which  the body $\body$ labeling the material points is a three-dimensional compact and orientable manifold with boundary equipped with a mass measure \cite{KD2021,KD2024}. The  reference configuration is not identified with the body, it is a time independent embedding of $\body$ into the affine Euclidean space $\espace$, playing a key role in the variational derivation of the three-dimensional configurational forces balance (as understood in \cite{YMO2006}). The classical ---non-relativistic--- four-dimensional Eshelby--Noether tensor is finally recalled.

Since the pioneering work of Nordström in 1916 \cite{Nor1916}, truly four-dimensional ---relativistic--- formulations of Mechanics of Solid Materials \cite{LC1985} exist, in physics literature  \cite{Sou1958,Syn1959,Sou1964,Ben1965,KM1992,KM1997,BS2003,Wer2006}, and in mechanics literature~\cite{GE1966,Mau1978a,Mau1978b,Mau1978c,EBT2006,PR2013,PRA2015,NWP2022}.
In the present work, we place ourselves within the appealing Variational Relativity framework developed by Souriau for perfect (hyperelastic) matter \cite{Sou1958,Sou1960,Sou1964} (see also~\cite{KD2024}), with the Gauge Theory mindset~\cite{Ble1981}. The Quantum Mechanics wave function $\psi$ is replaced by a matter field defined as a vector valued function $\Psi\colon \mM \to \body \subset \RR^{3}$, where the Universe $\mM$ is a four-dimensional pseudo-Riemannian manifold  and where one recognizes the three-dimensional body $\body$ labeling the material points of Continuum Mechanics. As proposed by Hilbert \cite{Hil1915}, a General Covariant matter Lagrangian (\emph{i.e.}, Action) is simply added to the Hilbert--Einstein functional for gravitation, as recalled in \autoref{sec:relativistic-elastodynamics}.
The variational extremization of the Lagrangian defines two stress--energy tensors (more precisely stress--impulsion--energy tensors):
\begin{itemize}
  \item the Hilbert stress--energy tensor $\bTH$ \cite{Hil1915}, a contravariant fourth-order tensor on the Universe $\mM$, defined from the variational derivative of the Lagrangian with respect to the (Lorentzian) Universe metric $g$,
  \item the Noether  stress--energy tensor $\bTN{\Psi}$ \cite{Noe1918}, a mixed fourth-order tensor on the Universe $\mM$,  defined from the variational derivative of the Lagrangian with respect to the matter field~$\Psi$.
\end{itemize}
At the end of  \autoref{sec:relativistic-elastodynamics}, we show that to assume that the matter Lagrangian is General Covariant links the two stress--energy tensors, as $\bTN{\Psi}=g\bTH$, and makes the associated Euler--Lagrange equations and conservation laws equivalent.

This General Relativity framework does not presuppose the existence of time, and then the not so direct introduction of a spacetime is interpreted as the definition of an observer. For Souriau, an observer is a particular (passive) matter field \cite[Section 39]{Sou1964}. In \autoref{sec:static-spacetime}, we expose and generalize his construction to possibly non flat but static spacetimes. This allows us to define a three-dimensional quotient manifold $\espace$, that becomes the affine Euclidean space in case of the flat Minkowski metric $g=\eta$ of Special Relativity. An \emph{effective matter field} is accordingly defined, with values in a reference configuration system $\Omega_{0}\subset \espace$, rather than in $\body$. A four-dimensional relativistic generalization of the Continuum Mechanics deformation, $\Phi$, is then proposed. \autoref{sec:relativistic-configurational-forces} shows that its related Noether tensor $\bTN{\Phi}$ is a relativistic generalization of the well-known  (non-relativistic) spacetime Eshelby tensor, providing a direct formulation (on the reference configuration) of the configurational forces balance. We prove that in Special Relativity, as in 3D Classical Continuum Mechanics, the corresponding balance equation is  not a new equation: it is implied by the relativistic conservation laws for the Noether stress-energy tensor $\bTN{\Psi}$. Finally, in \autoref{sec:classical-limit}, we provide the classical limits of the relativistic balance laws, completing then the limits obtained in \cite{KD2023} by the classical limit of the configurational forces balance.

We thus propose a novel variational framework inspired by General Relativity to derive the balance equations of configurational forces within a rigorous and natural mathematical setting, viewed as an appropriate limiting case of General Relativity.

\subsection*{Notations.}

We denote by $g( \cdot , \cdot )$ the scalar product for a metric $g$, by $\varphi^{*}$ (resp. $\varphi_{*}$) the pullback (resp. pushforward) operation by $\varphi$; $(\cdot)^{\star}$ stands for the dual transpose, the identity operator on the tangent space $TM$ of an $n$-dimensional manifold is noted $\bI_{n}$, and its dual transpose on the cotangent space $T^{\star}M$, by $\bI_{n}^{\star}$.

\section{Classical variational elastodynamics}
\label{sec:classical-elastodynamics}

In Classical Continuum Mechanics, the ``space'' $\espace$ is an affine Euclidean space of dimension 3 and the Euclidean metric is denoted by $\bq$. The corresponding Riemannian volume form is then denoted by $\vol_{\bq}$ and written as
\begin{equation*}
  \sqrt{\det(q_{ij})} \; \rd x^{1} \wedge \rd x^{2} \wedge \rd x^{3},
\end{equation*}
in any given system of local coordinates $(x^{i})$. The material medium is parameterized by a 3D compact and orientable manifold with boundary, noted $\body$, and called the \emph{body}. This manifold $\body$ is equipped with a volume form, \textit{i.e.}, a nowhere vanishing 3-form, the \emph{mass measure} $\mu$~\cite{TN1965}.

A \emph{configuration} of a material medium is represented by a smooth, orientation-preserving \emph{embedding}~\cite{Nol1972} (particles cannot occupy the same point in space)
\begin{equation*}
  \pp : \body \to \espace, \qquad \bX \mapsto \xx,
\end{equation*}
also referred to, sometimes, as a \emph{placement} in mechanics \cite{Rou1980,Rou1991a}. Its linear tangent map
\begin{equation*}
  T\pp: T\body \to T\espace
\end{equation*}
is denoted by~$\bF$ and the submanifold $\Omega_{\pp} = \pp(\body)$ of $\espace$ corresponds to a \emph{configuration system}. It is common to fix a \emph{reference configuration} $\pp_{0}$ and thus to substitute the body $\body$ by the \emph{submanifold $\Omega_{0} = \pp_{0}(\body)$ of the ambient space $\espace$}, a reference configuration system. In that case the \emph{deformation} $ \phi := \pp \circ \pp_{0}^{-1}$ (sometimes called the transformation),
\begin{equation*}
  \phi \colon \Omega_{0} \to \Omega_{\pp}, \qquad \xx_{0} \mapsto \xx,
\end{equation*}
is used rather than the placement $\pp$. Its linear tangent map $T\phi: T\Omega_{0} \to T\Omega$ is traditionally referred to as the \emph{deformation gradient} (or as the transformation gradient). It is here noted $\bF_{\phi}$. Setting $\bF_{0}:=T\pp_{0}$, we have
\begin{equation*}
  \bF_{\phi} = T\phi = T\pp .(T\pp_{0})^{-1} =\bF \bF_{0}^{-1}.
\end{equation*}

\begin{rem}
  The mass density $\rho$ is defined implicitly by
  \begin{equation*}
    \pp_{*} \, \mu = \rho\, \vol_{\bq}, \qquad \rho \in \Cinf(\Omega_{\pp}, \RR).
  \end{equation*}
  Given a reference configuration $\pp_{0}$, one defines similarly the \emph{reference mass density} $\rho_{0}$ by the implicit relation
  \begin{equation}\label{eq:MCrho0mu}
    (\pp_{0})_{*} \,\mu = \rho_{0}\, \vol_{\bq},\qquad \rho \in \Cinf(\Omega_{0}, \RR).
  \end{equation}
  The \emph{law of conservation of mass} $\rho_{0} /(\rho \circ \phi)= \det \bF_{\phi}$ is deduced from
  \begin{equation}\label{eq:MC}
    \phi^{*}  (\rho \vol_{\bq},)=\rho_{0} \,\vol_{\bq} = J_{\phi} (\rho \circ \phi) \,\vol_{\bq}, \qquad J_{\phi} := \det \bF_{\phi}.
  \end{equation}
\end{rem}

The \emph{configuration space} in Continuum Mechanics is thus the set, denoted by $\Emb(\body,\espace)$, of smooth embeddings of $\body$ into $\espace$. This set can be endowed with a differential manifold structure of infinite dimension, indeed, an open set of the Fréchet affine space $\Cinf(\body,\espace)$~\cite{Ham1982,Mil1984}. The tangent space to $\Emb(\body,\espace)$ at a point $\pp \in \Emb(\body,\espace)$ is described as follows. Let $\pp(t)$ be a smooth curve in $\Emb(\body,\espace)$ (a path of embeddings) such that $\pp(0) = \pp$, then $(\partial_{t}\pp)(0) = \delta\pp$ is a tangent vector at $\pp$ (or a variation of $\pp$) and can be interpreted as a \emph{virtual Lagrangian displacement}~\cite{ES1980}. The choice of a reference configuration $\pp_{0}$ induces a diffeomorphism (or \emph{change of parametrization} of the configuration space)
\begin{equation*}
  \Emb(\body,\espace) \longrightarrow \Emb(\Omega_{0},\espace), \qquad \pp \mapsto \phi := \pp \circ {\pp_{0}}^{-1}, \quad \text{where} \quad \Omega_{0} \subset \espace.
\end{equation*}


\subsection{Intrinsic Lagrangian}

\emph{Lagrangian Mechanics} was initially formulated by Lagrange, first for finite discrete systems (see~\cite{Sou1997} for an excellent book on that topic). It consists in writing down a functional, the \emph{Lagrangian}, defined on paths $\breve{\kappa} \colon [t_{0},t_{1}] \to \mathcal{C}$, where $\mathcal{C}$ is the configuration space of the mechanical system. In the case of first gradient hyperelasticity, the Lagrangian (also called the Action) is written as
\begin{equation}\label{eq:3D-Lagrangian}
  \mL[\breve{\pp}] = \int_{t_{0}}^{t_{1}} \left(\int_{\body} \ell\left(\bX, p(t)(\bX), \partial_{t}\pp(t)(\bX), \bF(t)(\bX)\right) \mu \right) \rd t,
\end{equation}
where the configuration space is here the set $\mathcal{C}:= \Emb(\body,\espace)$ of smooth embeddings from $\body$ to $\espace$ and the path of configurations $\breve\kappa$ is $\breve{\pp}=(\pp(t))$,  or simply $\pp=(\pp(t))$ with a notation abuse. The Lagrangian density $\ell$ is here a specific energy. It depends \emph{a priori} on the coordinates on the body, $\bX$, and on the first jet of $\pp$, at least for \emph{local systems}. Note that conservative body forces, but also boundary terms such as prescribed pressure (when non holonomic conditions are satisfied \cite{Sew1965,Sew1967,Bea1970,PC1991,KD2021}), can be included in the Lagrangian \cite{Bal1976/77}.

\begin{rem}
  We omit the explicit dependency on time $t$ of the Lagrangian density \cite{MH1994}. We will show in \autoref{sec:relativistic-elastodynamics} that this feature is a consequence of general covariance and the assumption of flatness of spacetime.
\end{rem}

Introducing a reference configuration $\pp_{0} \in \Emb(\body,\espace)$ and the deformation path $\breve{\phi}=(\phi(t))$  parameterized by time, or simply $\phi=(\phi(t))$,
\begin{equation}\label{eq:phi3D}
  \phi(t) = \pp(t) \circ \pp_{0}^{-1},
  \qquad \Omega_{0} \to \Omega_{\pp(t)},
\end{equation}
such that
\begin{equation*}
  \partial_{t} \pp(t) = \partial_{t} \phi (t) \circ \pp_{0}
  \qquad\text{and}\qquad
  \bF(t)= T\phi (t).T \pp_{0} = \bF_{\phi}(t).\bF_{0},
\end{equation*}
one can recast this integral on the reference configuration. We get
\begin{align*}
  \mL[\pp] & = \int_{t_{0}}^{t_{1}} \left( \int_{\Omega_{0}} (\pp_{0})_{*} \big[\ell\left(\bX, p(t)(\bX), \partial_{t}\pp(t)(\bX), \bF_{\phi}(t).\bF_{0}(\bX)\right) \mu\big] \right) \rd t                                        \\
           & = \int_{t_{0}}^{t_{1}} \left(  \int_{\Omega_{0}} \ell\left(\pp_{0}^{-1}(\xx_{0}), \phi(t)(\xx_{0}),\partial_{t}\phi(t)(\xx_{0}), \bF_{\phi}(t).\bF_{0}(\pp_{0}^{-1}(\xx_{0}))\right)\rho_{0}\vol_{\bq}\right) \rd t ,
\end{align*}
where the second equality uses mass conservation \eqref{eq:MCrho0mu}.
Usually, one prefers to express the Lagrangian density on the reference configuration, as a function of the deformation $\phi$ and of its first jets $\partial_{t}\phi$ and $T\phi=\bF_{\phi}$, setting
\begin{multline}\label{eq:L03D}
  L_{0}\big(\xx_{0},\phi(t,\xx_{0}), \partial_{t}\phi(t,\xx_{0}),\bF_{\phi}(t,\xx_{0})\big)
  \\
  := \rho_{0}\ell\big(\pp_{0}^{-1}(\xx_{0}), \phi(t)(\xx_{0}),\partial_{t}\phi(t)(\xx_{0}), \bF_{\phi}(t)\bF_{0}(\pp_{0}^{-1}(\xx_{0}))\big) ,
\end{multline}
so that
\begin{equation*}
  \mL[\pp] = \int_{t_{0}}^{t_{1}} \left(\int_{\Omega_{0}}  L_{0}(\xx_{0},\phi(t,\xx_{0}), \partial_{t}\phi(t,\xx_{0}),\bF_{\phi}(t,\xx_{0})) \,\vol_{\bq}\right) \rd t.
\end{equation*}
For the sake of conciseness, this expression can be simply written in condensed form
\begin{equation}\label{eq:LCCM}
  \mL[\pp] = \int_{t_{0}}^{t_{1}} \left(\int_{\Omega_{0}}  L_{0}(\xx_{0},\phi, \partial_{t}\phi,\bF_{\phi}) \,\vol_{\bq}\right) \rd t,
\end{equation}
meaning that the constitutive tensors are defined on the reference configuration.

\begin{rem}\label{rem:noF0}
  Let us emphasize that  in definition \eqref{eq:L03D} the linear tangent map of the reference embedding (\emph{i.e.}, the reference deformation gradient),
  \begin{equation*}
    \bF_{0}=T\pp_{0}\colon T\body \to T\Omega_{0},
  \end{equation*}
  is omitted from the left inner arguments of the Lagrangian density $L_{0}$. This feature corresponds to the mechanical choice of defining the material properties through a possibly heterogeneous mass density $\rho_{0}=\rho_{0}(\xx_{0})$ and through constitutive tensors fields $\bA=\bA(\xx_{0})$ on the reference configuration  $\Omega_{0}=\pp_{0}(\body)$.  Three such examples are: $(i)$ the kinematic energy density
  \begin{equation*}
    K\big(\xx_{0}, \partial_{t}\phi\big) = \frac{1}{2} \rho_{0} \|\VV\|^{2},
  \end{equation*}
  where $\VV=\partial_{t}\phi$ is the Lagrangian velocity,
  $(ii)$ the potential energy density $\rho_{0}\, \bg \cdot \phi$,
  introducing the gravity $\bg$,
  and $(iii)$ the (anisotropic) Mooney energy density,
  \begin{equation}\label{eq:Mooney}
    W\big(\xx_{0},\bF_{\phi}\big)
    = \tr \left( \bA \bC\right)-\tr \left( \bA \bq\right),
    \quad
    \text{on} \quad \Omega_{0},
  \end{equation}
  where $\bq$ is the Euclidean metric, $\bC:= \phi^{*} \bq = \bF_{\phi}^{\star}\, \bq\, \bF_{\phi}$ is the right Cauchy--Green tensor, and $\bA$ is a contravariant second order constitutive tensor. Note that one has $\bA=A\, \bq^{-1}$  in the isotropic case, where $A$ is the Mooney (material) parameter~\cite{Moo1940}.
\end{rem}


\subsection{Natural variations of the Lagrangian}

In formula \eqref{eq:LCCM}, it seems forgotten that the definition of the deformation $\phi$ requires the \emph{choice of a reference configuration} $\pp_{0}$. For a natural derivation of the configurational forces balance, we prefer to rewrite the Lagrangian so that the reference configuration $\pp_{0}$ acts as a parameter. The dependency on $\pp_{0}$ is made explicit thanks to the pullback of the space integral onto the body $\body$, as
\begin{equation}\label{eq:Lphip0}
  \begin{aligned}
    \mL[\pp] = \mL_{0}[\phi;\pp_{0}]
     & = \int_{t_{0}}^{t_{1}} \left(\int_{\body}  L_{0}\big(\pp_{0}(\bX),\phi(t,\pp_{0}(\bX)), \partial_{t}\phi(t,\pp_{0}(\bX)),\bF_{\phi}(t,\pp_{0}(\bX))\big) \,{\pp_{0}}^{\! \! \! *}\,\vol_{\bq}\right) \rd t,
  \end{aligned}
\end{equation}
or simply as
\begin{equation}\label{eq:REFLAG}
  \mL[\pp] = \mL_{0}[\phi;\pp_{0}] = \int_{t_{0}}^{t_{1}} \left(\int_{\body}  L_{0}\big(\pp_{0},\phi\circ \pp_{0}, \partial_{t}\phi\circ \pp_{0},\bF_{\phi}\circ \pp_{0}\big) \,{\pp_{0}}^{\! \! \! *}\,\vol_{\bq}\right) \rd t,
\end{equation}
where the constitutive tensors are implicitly defined on the reference configuration. From the previous definition of the Lagrangian \eqref{eq:REFLAG}, we can now introduce the corresponding variational principle by the extremization of the Lagrangian:
\begin{equation}\label{eq:VARPLE}
  \delta \mL[\pp] = \delta \mL_{0}[\phi;\pp_{0}] = \delta_{\phi}\mL_{0} + \delta_{\pp_{0}}\mL_{0} = D_{\phi}\mL_{0}.\delta \phi + D_{\pp_{0}}\mL_{0}.\delta \pp_{0} = 0 \qquad \forall \delta{\phi}, \delta \pp_{0}
\end{equation}
where $D_{\phi}\mL_{0}.\delta \phi$ and $D_{\pp_{0}}\mL_{0}.\delta \pp_{0}$ are the variational derivatives with respect to $\phi$ and $\pp_{0}$. Let us denote by $\langle \cdot, \cdot \rangle$ the scalar product for the Euclidean metric $\bq$ on the Euclidean space~$\espace$. We get then the following variations of the Lagrangian:
\begin{itemize}
  \item the variation $\delta_{\phi}\mL_{0}$ of $\mL_{0}[\phi;\pp_{0}]$ relatively to the path of deformation  $\phi=(\phi(t))$ is \cite{MH1994}
        \begin{multline}\label{eq:varphi}
          \delta_{\phi}\mL_{0} =D_{\phi}\mL_{0}. \delta \phi = \int_{t_{0}}^{t_{1}} \left( \int_{\Omega_{0}} \left(\pd{L_{0}}{\phi} - \dive\left(\pd{L_{0}}{\bF_{\phi}}\right) - \dd{}{t}\left(\pd{L_{0}}{(\partial_{t}\phi)}\right) \right)\cdot\delta \phi \, \vol_{\bq}\right) \rd t
          \\
          + \int_{t_{0}}^{t_{1}} \left(\int_{\partial\Omega_{0}} \Big\langle\delta \phi\cdot\pd{L_{0}}{\bF_{\phi}} ,\nn_{0}\Big\rangle \, \rd \ba_{0} \right) \rd t \,  + \, \int_{\Omega_{0}} \left[\pd{L_{0}}{(\partial_{t}\phi)} \cdot \delta \phi\right]_{t_{0}}^{t_{1}}\vol_{\bq} ,
        \end{multline}

  \item the variation $\delta_{\pp_{0}}\mL_{0}$ of $\mL_{0}[\phi;\pp_{0}]$ relatively to the reference configuration $\pp_{0}$ is
        \begin{multline}\label{eq:p0}
          \delta_{\pp_{0}}\mL_{0} = D_{\pp_{0}}\mL_{0}.\delta \pp_{0} =
          \\
          \int_{t_{0}}^{t_{1}} \left( \int_{\Omega_{0}} \left(\pd{L_{0}}{\xx_{0}} + \pd{L_{0}}{\phi} \cdot \bF_{\phi} + \pd{L_{0}}{\bF_{\phi}} :\rdx\bF_{\phi} + \pd{L_{0}}{(\partial_{t}\phi)} \cdot (\partial_{t}\bF_{\phi}) - \dive (L_{0}\,\Idt^{\star}) \right) \cdot \delta \pp_{0} \circ {\pp_{0}}^{-1} \, \vol_{\bq}\right)\rd t
          \\
          + \int_{t_{0}}^{t_{1}} \left(\int_{\partial\Omega_{0}}  L_{0} \, \big\langle \delta \pp_{0} \circ {\pp_{0}}^{-1} , \nn_{0} \big\rangle\, \rd \ba_{0} \right) \rd t .
        \end{multline}
\end{itemize}
The details of the calculations for the variation relative to $\phi$ and $\pp_{0}$ are given in \autoref{sec:calculation-details}.

Formally speaking, the proposed approach (already introduced in~\cite{YMO2006}) provides a rigorous and geometric interpretation of the so-called horizontal variations introduced in~\cite{ZOM2008} and which leads to the same result. A key point of our approach it that it clarifies the underlining assumption ---fundamental for the further derivation of the configurational forces balance--- that the variation with respect to the reference configuration is performed at fixed constitutive tensors, defined on the reference configuration system $\Omega_{0}$ (see Remark~\ref{rem:noF0}).


\subsection{Euler’s Law of the balance of linear momentum}
\label{subsec:Euler-law}

Applying~\eqref{eq:varphi} for variations $\delta \phi$ which vanishes on the boundary $\partial \Omega_{0}$ and for times $t = t_{0}$ and $t=t_{1}$, one gets then that a path of deformation $\phi$ is a critical point of the Lagrangian $\mL$ if and only if it satisfies the \emph{Euler--Lagrange equation}
\begin{equation}\label{eq:MMC-Euler--Lagrange}
  \pd{L_{0}}{\phi} - \dive\left(\pd{L_{0}}{\bF_{\phi}}\right) - \dd{}{t}\left(\pd{L_{0}}{(\partial_{t}\phi)}\right) = 0, \qquad \text{on} \quad \Omega_{0}.
\end{equation}
where the time derivative $\displaystyle \dd{f}{t}$ on $\Omega_{0}$ stands for the total derivative $\displaystyle \dd{}{t} f(t, \xx_{0}, \partial_{t} \phi(t, \xx_{0}), \bF_{\phi}(t, \xx_{0}))$, with respect to time $t$.

In order to avoid cumbersome notations and to clarify its interpretation with mechanical contents, we introduce the following quantities:
\begin{align*}
  \ff      & := \pd{L_{0}}{\phi},                                 &  & \text{(Exterior force density)},
  \\
  \bp      & := \pd{L_{0}}{(\partial_{t}\phi)} = \pd{L_{0}}{\VV}, &  & \text{(Linear momentum density)},
  \\
  \hat \bP & := - \pd{L_{0}}{T\phi} = - \pd{L_{0}}{\bF_{\phi}},   &  & \text{(First Piola-Kirchhoff stress tensor)},
  \\
  \VV      & := \partial_{t} \phi,                                &  & \text{(Lagrangian velocity)}.
\end{align*}
of components $f_{i}$, $p_{i}$, ${P_{i}}^{J}$ and $V^{i}$, respectively. The Euler--Lagrange equation~\eqref{eq:MMC-Euler--Lagrange} recasts then as the  balance of linear momentum of Classical Continuum Mechanics,
\begin{equation}\label{eq:MMC-Euler--Lagrange-bis}
  \dive \hat \bP +\ff = \dd{\bp}{t} , \qquad \text{on} \quad \Omega_{0}.
\end{equation}

The total energy density $E_{0}=E_{0}(t, \xx_{0})$ of the system on the reference configuration $\Omega_{0}$, is defined as the Legendre transform of the Lagrangian $L_{0}$. It is written as
\begin{equation}\label{eq:ECCM}
  E_{0}:= \pd{L_{0}}{(\partial_{t}\phi)} \cdot \partial_{t}\phi - L_{0} = \bp \cdot \VV - L_{0}.
\end{equation}
We have thus
\begin{align*}
  \dd{E_{0}}{t} & = \dd{}{t} \left(\pd{L_{0}}{(\partial_{t}\phi)}\right)\cdot \partial_{t}\phi + \left(\pd{L_{0}}{(\partial_{t}\phi)}\right)
  \cdot {\partial_{t}}^{2}\phi - \dd{L_{0}}{t}
  \\
                & = \left\{ \dd{}{t}\left(\pd{L_{0}}{(\partial_{t}\phi)}\right) - \left(\pd{L_{0}}{\phi}\right)\right\} \cdot \partial_{t}\phi
  - \left(\pd{L_{0}}{\bF_{\phi}}\right) : \partial_{t} \rdx \phi - \pd{L_{0}}{t}.
\end{align*}
Hence, along a critical path $\phi$ (which satisfies the Euler-Lagrange equations~\eqref{eq:MMC-Euler--Lagrange}), we get
\begin{equation*}
  \dd{E_{0}}{t}  = - \dive\left(\pd{L_{0}}{\bF_{\phi}}\right)\cdot\partial_{t}\phi
  - \left(\pd{L_{0}}{\bF_{\phi}}\right) :\partial_{t} \rdx \phi - \pd{L_{0}}{t}.
\end{equation*}
Now, using the equality
\begin{equation*}
  \dive\left(\partial_{t}\phi \cdot \pd{L_{0}}{\bF_{\phi}}\right) = \dive\left(\pd{L_{0}}{\bF_{\phi}}\right)\cdot \partial_{t}\phi + \left(\pd{L_{0}}{\bF_{\phi}}\right) : \partial_{t} \rdx \phi,
\end{equation*}
one gets
\begin{equation}\label{eq:energy-density-balance}
  \dd{E_{0}}{t} + \dive \left( \pd{L_{0}}{\bF_{\phi}} \cdot \partial_{t}\phi \right) + \frac{\partial L_{0}}{\partial t} = 0,
  \qquad
  \text{on $\Omega_{0}$},
\end{equation}
which recasts as the Classical Continuum Mechanics balance for the \emph{total energy density}
\begin{equation}\label{eq:energy-conservation}
  \dd{E_{0}}{t} = \dive\big(\VV \hat \bP \big),
\end{equation}
when the Lagrangian density $L_{0}$ does not depend explicitly on time $t$.

\begin{rem}
  Here, we can recall the straightforward consequences of the well-known Noether's theorem~\cite{Noe1918} for Classical Continuum Mechanics~\cite[Section 5.5]{MH1994} (see also~\cite{MW2001,LMOW2003}). This allows to recover the balance of linear momentum \eqref{eq:MMC-Euler--Lagrange-bis} from the spatial translation invariance of the Lagrangian. Similarly, one recovers the balance of angular momentum (\emph{i.e.}, the symmetry of the Cauchy stress tensor), from the invariance by rotation of the Lagrangian.
\end{rem}


\subsection{Recovering the balance equation of Configurational forces}
\label{subsec:classical-configurational-forces}

There are several ways to derive the balance of configurational forces in the literature~\cite{MT1992,Mau1993,Gur1995,KM2001,SSD2009}. Here, we propose a new one, in the framework of the \emph{Intrinsic Lagrangian Formulation of Continuum Mechanics} of Noll~\cite{Nol1972,Nol1978} and Rougée \cite{Rou1980,Rou1991a,Rou1991,Rou2006}: we derive it simply as a critical point of the Lagrangian $\mL_{0}[\phi;\pp_{0}]$ defined in~\eqref{eq:3D-Lagrangian} (expressed on the body $\body$ and in which the reference configuration $\pp_{0}$ acts explicitly as a parameter). Such a critical point is a couple deformation path/reference configuration $(\phi,\pp_{0})$ which satisfies the vanishing of both variations
\begin{equation*}
  \delta_{\phi}\mL_{0} = 0, \quad \text{and} \quad \delta_{\pp_{0}}\mL_{0} = 0.
\end{equation*}
The first equation leads back to Euler--Lagrange equation~\eqref{eq:MMC-Euler--Lagrange} and to the equilibrium equation \eqref{eq:MMC-Euler--Lagrange-bis}, whereas the second equation, applied to variations $\delta \pp_{0}$ which vanish on the boundary $\partial\body$, gives, by~\eqref{eq:p0},
\begin{equation*}
  \pd{L_{0}}{\xx_{0}} + \pd{L_{0}}{\phi}\cdot \bF_{\phi} + \pd{L_{0}}{\bF_{\phi}} :\rdx\bF_{\phi} + \pd{L_{0}}{(\partial_{t}\phi)} \cdot(\partial_{t}\bF_{\phi}) - \dive (L_{0}\, \Id^{\star}) = 0.
\end{equation*}
By the formula
\begin{equation*}
  \pd{L_{0}}{\bF_{\phi}} :\rdx\bF_{\phi} = \dive \left(\bF_{\phi}^{\star}\cdot \pd{L_{0}}{\bF_{\phi}}\right) - \dive \left(\pd{L_{0}}{\bF_{\phi}}\right)\cdot \bF_{\phi},
\end{equation*}
which details, in components, as
\begin{equation*}
  {\left(\pd{L_{0}}{\bF_{\phi}}\right)_{i}}^{J}{(\bF_{\phi})^{i}}_{J,K} = {\left(\pd{L_{0}}{\bF_{\phi}}\right)_{i}}^{J}{(\bF_{\phi})^{i}}_{K,J}
  = \left[{\left(\pd{L_{0}}{\bF_{\phi}}\right)_{i}}^{J}{(\bF_{\phi})^{i}}_{K}\right]_{,J}
  -{{\left(\pd{L_{0}}{\bF_{\phi}}\right)_{i}}^{J}}_{,J}{(\bF_{\phi})^{i}}_{K} ,
\end{equation*}
we get
\begin{equation*}
  \pd{L_{0}}{(\partial_{t}\phi)} \cdot (\partial_{t}\bF_{\phi}) = \dd{}{t}\left(\pd{L_{0}}{(\partial_{t}\phi)} \cdot\bF_{\phi}\right) - \dd{}{t}\left(\pd{L_{0}}{(\partial_{t}\phi)} \right)\cdot\bF_{\phi}.
\end{equation*}
We have therefore
\begin{multline*}
  \pd{L_{0}}{\xx_{0}} + \pd{L_{0}}{\phi} \cdot \bF_{\phi} + \left\{\dive \left(\bF_{\phi}^{\star} \cdot \pd{L_{0}}{\bF_{\phi}}\right) - \dive \left(\pd{L_{0}}{\bF_{\phi}}\right)\cdot\bF_{\phi}\right\}
  \\
  + \left\{\dd{}{t}\left(\pd{L_{0}}{(\partial_{t}\phi)}\cdot \bF_{\phi}\right) - \dd{}{t}\left(\pd{L_{0}}{(\partial_{t}\phi)} \right)\cdot\bF_{\phi}\right\} - \dive (L_{0}\,\Idt^{\star}) = 0.
\end{multline*}
This last equation recasts as
\begin{multline*}
  \pd{L_{0}}{\xx_{0}} + \left\{\pd{L_{0}}{\phi} - \dive \left(\pd{L_{0}}{\bF_{\phi}}\right) - \dd{}{t}\left(\pd{L_{0}}{(\partial_{t}\phi)} \right)\right\} \cdot\bF_{\phi}
  \\
  + \dive \left(\bF_{\phi}^{\star} \cdot \pd{L_{0}}{\bF_{\phi}} - L_{0}\,\Idt^{\star}\right) + \dd{}{t}\left(\pd{L_{0}}{(\partial_{t}\phi)} \cdot\bF_{\phi}\right) = 0.
\end{multline*}
Making use of Euler--Lagrange equation~\eqref{eq:MMC-Euler--Lagrange}, we obtain finally
\begin{equation}\label{eq:general_config-forces-conservation}
  \pd{L_{0}}{\xx_{0}} + \dive \left(\bF_{\phi}^{\star} \cdot \pd{L_{0}}{\bF_{\phi}} - L_{0}\,\Idt^{\star}\right) + \dd{}{t}\left(\pd{L_{0}}{(\partial_{t}\phi)} \cdot \bF_{\phi}\right) = 0,
\end{equation}
where we have introduced the first Piola--Kirchhoff tensor $\hat \bP =- {\partial L_{0}}/{\partial \bF_{\phi}}$ (with components
${P}_{i}{}^{J}=-{\partial L_{0}}/{\partial {F^{\;i}_{\phi\,J}}}$), and the linear momentum density covector $\bp$. We end up with the balance equation
\begin{equation}\label{eq:config-forces-conservation}
  \pd{L_{0}}{\xx_{0}} - \dive \left(\bF_{\phi}^{\star} \cdot \hat \bP+ L_{0}\,\Idt^{\star}\right) + \dd{}{t}\left(\bp\cdot \bF_{\phi}\right) = 0,
\end{equation}
which is the \emph{configurational forces} balance of Classical Continuum Mechanics on $\Omega_{0}$. The second order tensor
\begin{equation}\label{eq:B3D-classical}
  \hat \bB:=-L_{0}\,\Idt^{\star}-\bF_{\phi}^{\star} \cdot \hat \bP
\end{equation}
is the 3D \emph{Eshelby stress tensor}, a mixed tensor field $\hat \bB \colon T^{\star} \Omega_{0}~\to~T^{\star} \Omega_{0}$, with components $({B}_{I}{}^{J})$, and defined on the reference configuration.

\begin{rem}
  In the particular case of homogeneous Hyperelasticity, the Lagrangian density is given by (see Remark \ref{rem:noF0})
  \begin{equation}\label{eq:L0classic}
    L_{0}(\phi, \VV, \bF_{\phi})=K(\VV)- W(\bF_{\phi}) + \ff\cdot \phi,
    \qquad
    K=\frac{1}{2} \rho_{0}\langle \VV, \VV\rangle,
  \end{equation}
  where $K$ is the kinematic energy density and $W$ is the hyperelastic energy density. The configurational forces balance takes then the common form~\cite{Esh1970,MT1992}
  \begin{equation}\label{eq:usual-config-forces}
    \dive \hat \bB + \dd{}{t}\left(\bp \cdot \bF_{\phi}\right) = 0,
    \quad \text{where} \quad
    \begin{cases}
      \hat \bB:=\left(W-K\right) \Idt^{\star}-\bF_{\phi}^{\star} \cdot \hat \bP,
      \\
      \hat \bP  = \displaystyle \pd{W}{\bF_{\phi}},
      \\
      \bp= \rho_{0} \VV^{\flat},
    \end{cases}
  \end{equation}
  and $\VV^{\flat}=\bq \VV$ is the covariant version of the Lagrangian velocity.
\end{rem}

One can also recover the configurational forces balance using the Noether stress--energy tensor defined in \autoref{sec:variational-calculus} (see also~\cite{MT1992,KM2001}). In the present case, the field variable
\begin{equation*}
  \phi \colon \RR \times \Omega_{0} \to \espace, \qquad ({x_{0}}^{0} = t,{x_{0}}^{I}) \mapsto (x^{i}=\phi^{i}(t, {x_{0}}^{I})),
  \qquad
  1 \le I \le 3 \;\, \text{and} \;\, 1 \le i \le 3,
\end{equation*}
is a submersion and we get the 4D Noether--Eshelby tensor (sometimes referred to as spacetime energy-material momentum tensor or spacetime Eshelby tensor \cite{Esh1975,Mau1992,Gur1995}),
\begin{equation*}
  \bEsh := \bTN{\phi} = (T\phi)^{\star}\cdot \pd{L_{0}}{T\phi}- L_{0} \, \Idq^{\star},
\end{equation*}
with components,
\begin{equation*}
  {\Eshcomp_{\mu}}^{\nu}={(T\phi)^{k}}_{\mu}{\left(\pd{L_{0}}{T\phi}\right)_{k}}^{\nu} - L_{0} \, {\delta_{\mu}}^{\nu}.
\end{equation*}
Since ${(T\phi)^{i}}_{t}=\partial_{t} \phi^{i}$ and ${(T\phi)^{i}}_{J}={(\bF_{\phi})^{i}}_{J}$, we have in particular
\begin{align*}
  {\Eshcomp_{t}}^{t} & = \partial_{t}\phi^{k} \pd{L_{0}}{\partial_{t} \phi^{k}}- L_{0} = E_{0},
  \\
  {\Eshcomp_{t}}^{I} & = \partial_{t}\phi^{k} {\left(\pd{L_{0}}{\bF_{\phi}}\right)_{k}}^{I} = -V^{k} {P_{k}}^{I},
  \\
  {\Eshcomp_{I}}^{t} & = {(\bF_{\phi})^{k}}_{I}\left(\pd{L_{0}}{\partial_{t}\phi^{k}}\right) = p_{k}{(\bF_{\phi})^{k}}_{I},
  \\
  {\Eshcomp_{I}}^{J} & = {(\bF_{\phi})^{k}}_{I} {\left(\pd{L_{0}}{\bF_{\phi}}\right)_{k}}^{J} - L_{0} \, {\delta_{I}}^{J} = -{(\bF_{\phi}^{\star})_{I}}^{k} {P_{k}}^{J} - L_{0} \, {\delta_{I}}^{J} = B_{I}{}^{J}.
\end{align*}
Therefore, the relations of \autoref{sec:variational-calculus}
\begin{equation*}
  \left(\dive  \bEsh\right)_{t} = - \pd{L_{0}}{t} = 0 \quad \text{and} \quad \left(\dive  \bEsh\right)_{I} =  - \pd{L_{0}}{{x_{0}}^{I}}
\end{equation*}
give back both the energy balance \eqref{eq:ECCM} and the configurational forces balance \eqref{eq:usual-config-forces}
\begin{equation}\label{eq:ConfigBalance}
  \dd{E_{0}}{t} = \dive (\VV \cdot \hat \bP), \qquad \frac{\partial L_{0}}{\partial \xx_{0}} + \dive \hat \bB + \dd{}{t} \left(\bp \cdot \bF_{\phi}\right)=0,
\end{equation}
expressed on the reference configuration.

\begin{rem}\label{rem:Esh4Dclassic}
  Equations \eqref{eq:energy-density-balance} and \eqref{eq:general_config-forces-conservation} can also be written in a general spacetime matrix form. Let us introduce the 4D divergence, $\dive^{4D}$, and the block expression of the energy-momentum Eshelby tensor (in Lagrangian coordinate system $(t, \xx_{0})$)
  \begin{equation*}
    \bEsh =
    \begin{bmatrix}
      \displaystyle \frac{\partial L_{0}}{\partial (\partial_{t}\phi)} \cdot \partial_{t}\phi - L_{0} =( {\Eshcomp_{t}}^{t} ) & \displaystyle \partial_{t}\phi  . \displaystyle\pd{L_{0}}{\bF_{\phi}} = ( {\Eshcomp_{t}}^{J} )
      \\
      \bF_{\phi}^{\star} \cdot \displaystyle \pd{L_{0}}{(\partial_{t}\phi)} = ( {\Eshcomp_{I}}^{t} )                          &
      \displaystyle \bF_{\phi}^{\star} \cdot \pd{L_{0}}{\bF_{\phi}} - L_{0}\,\Idt^{\star}= ({\Eshcomp_{I}}^{J})
    \end{bmatrix}.
  \end{equation*}
  This tensor recasts as
  \begin{equation*}
    \bEsh =
    \begin{bmatrix}
      E_{0}                & -\VV \cdot  \hat \bP
      \\
      \bp \cdot \bF_{\phi} & \hat \bB
    \end{bmatrix},
  \end{equation*}
  using the definitions of \autoref{subsec:Euler-law}, and where $\hat \bB$ is the 3D Eshelby stress tensor.
  We have thus
  \begin{equation*}
    \pd{L_{0}}{(t, \xx_{0})} + \dive^{4D} \bEsh =0,
  \end{equation*}
  where
  \begin{equation*}
    \pd{L_{0}}{(t, \xx_{0})}=
    \begin{bmatrix}
      \displaystyle \frac{\partial L_{0}}{\partial t} & \displaystyle\frac{\partial L_{0}}{\partial \xx_{0}}
    \end{bmatrix},
  \end{equation*}
  and
  \begin{equation*}
    \dive^{4D} \bEsh =
    \begin{bmatrix}
      {{\Eshcomp_{t}}^{\nu}}_{, \nu} & {{\Eshcomp_{I}}^{\nu}}_{, \nu}
    \end{bmatrix},
  \end{equation*}
  where Greek indices $\mu, \nu, \dotsc$ run from 0 for time to 3 (where 0 stands for $t$) and Latin indices $I, J, \dotsc$ run from 1 to 3.
\end{rem}

\section{Relativistic variational elastodynamics}
\label{sec:relativistic-elastodynamics}

In General Relativity, the Universe is assumed to be a 4D orientable manifold $\mM$, endowed with an hyperbolic metric $g$, of signature
$(-,+,+,+)$. Its pseudo-Riemannian volume form is denoted by $\vol_{g}$ and written as
\begin{equation}\label{eq:volg}
  \sqrt{-\det (g_{\mu\nu})} \; \rd x^{0} \wedge \rd x^{1} \wedge \rd x^{2} \wedge \rd x^{3},
\end{equation}
in any given system of local coordinates $(x^{\mu})$. Unlike Classical Mechanics, where the spatial space $\espace$ is equipped with a fixed Euclidean metric $\bq$, the metric $g$ is not fixed but is itself a field variable which is influenced by the presence of matter or electromagnetic energy. In the present case, where we are only interested by matter and not by electromagnetism, we represent this ``perfect matter'', like Souriau~\cite{Sou1958,Sou1960,Sou1964,KD2023}, by a smooth vector valued function
\begin{equation*}
  \Psi: \mM \to V,
\end{equation*}
defined on the 4D Universe $\mM$ and with values in a three-dimensional vector space $V$. This 3D vector space encodes possible labels for material points and $\Psi$ is called a \emph{perfect matter field}. We assume therefore that there is a 3D compact orientable manifold with boundary $\body\subset V$, called the \emph{body} (see \autoref{fig:fiberedtimelines}) and endowed with a 3D volume form $\mu\in \Omega^{3}(\body)$, the \emph{mass measure}~\cite{KM1992}. The mass measure on $\body$ encodes the density of matter particles and as in 3D Classical Continuum Mechanics~\cite{TN1965,Nol1972,Nol1974,Nol1978}, the integral
\begin{equation*}
  m=\int_{\body} \mu
\end{equation*}
is interpreted as the total mass of matter under study.

It is further assumed that each perfect matter field $\Psi$ is a \emph{submersion} on $\mW:=\Psi^{-1} (\body)$, meaning that the linear tangent map $T\Psi: T\mW \to TV$ is of rank 3 at each point of $\mW$. Hence, the sets $\Psi^{-1}(\bX)$, $\bX\in \body$, are one-dimensional submanifolds of $\mM$ called \emph{particle World lines}. The set $\mW\subset \mM$ is fibered by these particle World lines and referred to as the \emph{matter World tube}.

\begin{figure}[ht]
  \centering
  \includegraphics[width=11cm]{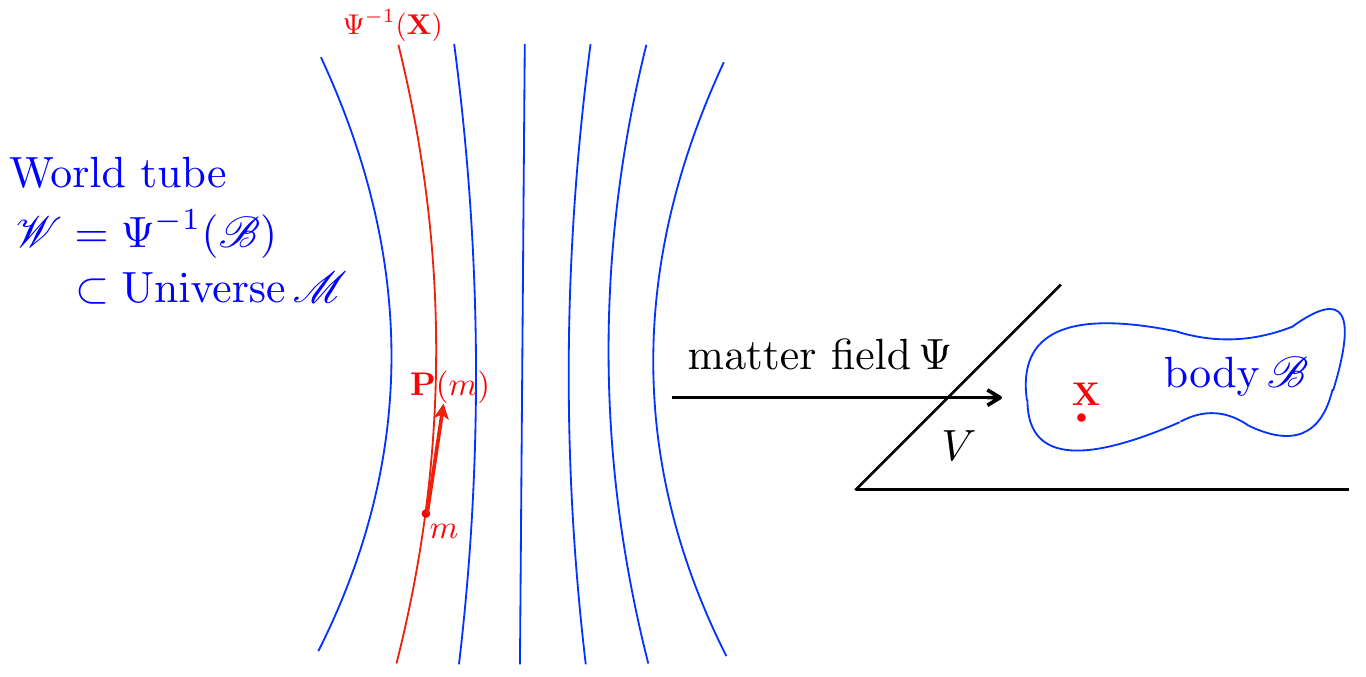}
  \caption{The World tube $\mW=\Psi^{-1} (\body)$ fibered by the particle World lines $\Psi^{-1}(\bX)$.}
  \label{fig:fiberedtimelines}
\end{figure}

\begin{rem}
  It is worth mentioning that the point of view is here reverse to the one of Classical Continuum Mechanics of solids, in which a configuration is an embedding $\pp\colon\body \to \espace$ of the body $\body$ into the 3D space $\espace$. A key difference is that, in Classical Continuum Mechanics, $\pp$ and its tangent map $ \bF=T\pp \colon T\body \to T\espace$, are invertible, whereas here, the matter field $\Psi$ and its tangent map $T\Psi$ are not.
\end{rem}

The pullback by $\Psi$ of the mass measure $\mu$ on the body $\body$
\begin{equation*}
  \Psi^{*} \mu=(\mu\circ \Psi)(T\Psi \cdot, T\Psi \cdot, T\Psi \cdot)
\end{equation*}
is a 3-form defined on the 4D World tube $\mW$. Since $\Psi$ is assumed to be a submersion, there exists a nowhere vanishing vector field $\Jmat$ on $\mW$, such that
\begin{equation}\label{eq:def-P}
  \Psi^{*} \mu=i_{\Jmat} \vol_{g}=\Jmat\cdot \vol_{g},
\end{equation}
where $i_{\Jmat}$ is the interior product (or left contraction) of $\Jmat$ with the 4-form $\vol_{g}$. This vector field $\Jmat$ is the \emph{current of matter}~\cite{Sou1958} (it spans the one-dimensional kernel of the tangent map $T\Psi$).

To describe perfect matter, it is furthermore assumed that $\Jmat$ is timelike, \emph{i.e.} that
\begin{equation*}
  g(\Jmat, \Jmat) <0
  \quad
  \text{on}\; \,\mW,
\end{equation*}
and induces therefore a time orientation on $\mW$. The \emph{rest mass density} is defined as
\begin{equation}\label{eq:def-rho}
  \rho_{r} := \sqrt{-g(\Jmat, \Jmat)}  ,
  \qquad
  \Jmat = \rho_{r} \UU,
\end{equation}
where $\UU$ is a unit timelike vector field on the World tube ($g(\UU,\UU)=-1$). Since $(\dive^{g} \Jmat)\vol_{g} = \rd\Psi^{*} \mu = \Psi^{*} \rd \mu=0$, mass conservation takes the form
\begin{equation*}
  \dive^{g} \Jmat = \dive^{g} (\rho_{r} \UU) = 0
  \quad
  \text{on}\; \,\mW.
\end{equation*}

\subsection{Relativistic variational calculus}

Souriau's formulation of Relativistic Hyperelasticity~\cite{Sou1958,Sou1964} is described by the following Lagrangian,
\begin{equation*}
  \mL[g,\Psi] = \mH[g] + \mL^{\text{mat}}[g,\Psi],
\end{equation*}
where
\begin{equation*}
  \mH[g] = \int \frac{1}{2\kappa}\left(R_{g}(m)-2\Lambda\right)\, \vol_{g}.
\end{equation*}
is the \emph{Hilbert-Einstein functional} which described inertia and gravitation, $R_{g}$ is the scalar curvature, $\Lambda$ is the cosmological constant, and $\kappa=8\pi G/c^{4}$ is the Einstein constant, function of the gravitational constant $G$ and the speed of light $c$. The additional term $\mL^{\text{mat}}[g,\Psi]$, represents perfect matter and is the source of the gravitational field.

A Lagrangian $\mL [g,\Psi]$ is \emph{general covariant} if
\begin{equation*}
  \mL[\varphi^{*}g,\varphi^{*}\Psi] = \mL [g,\Psi],
\end{equation*}
for all (local) diffeomorphism $\varphi$. When a Lagrangian $\mL^{\text{mat}}[g,\Psi]$ is local and its density depends only on the $0$ jet of $g$ and the first jet of $\Psi$, it is written as
\begin{equation}\label{eq:Lmat}
  \mL^{\text{mat}}[g,\Psi]  = \int L (m, g_{m}, \Psi(m), T_{m}\Psi)\, \vol_{g},
\end{equation}
and general covariance implies that
\begin{equation}\label{eq:general-covariance}
  L(m,\bA(m)^{\star}g_{\varphi(m)}\bA(m),\Psi(\varphi(m)),T_{\varphi(m)}\Psi.\bA(m)) = L(\varphi(m),g_{\varphi(m)},\Psi(\varphi(m)),T_{\varphi(m)}\Psi),
\end{equation}
for all $m \in \mM$, where $\bA(m) = T_{m}\varphi$ is the tangent map of the local diffeomorphism $\varphi$. In particular, one can deduce that $L$ does not depend explicitly on $m$ in that case. It was moreover proven by Souriau \cite{Sou1958,Sou1964}, the following result.

\begin{thm}[Souriau, 1958]\label{thm:Souriau}
  Suppose that the Lagrangian
  \begin{equation*}
    \mL^{\text{mat}}[g,\Psi] = \int L(g_{m}, \Psi(m), T_{m}\Psi) \vol_{g},
  \end{equation*}
  is general covariant. Then, its Lagrangian density $L$ can be recast as
  \begin{equation}\label{eq:Lmatgc}
    L(g_{m}, \Psi(m), T_{m}\Psi) = L^{gc}(\Psi(m), \bK(m)),
  \end{equation}
  for some function $L^{gc}$, and where
  \begin{equation}\label{eq:conformation}
    \bK:= (T\Psi)\,g^{-1} (T\Psi)^{\star}
  \end{equation}
  has been called the \emph{conformation} by Souriau.
\end{thm}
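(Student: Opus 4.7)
The plan is to exploit general covariance in two stages: first to eliminate any explicit dependence of the Lagrangian density $L$ on the base point $m$, and then to show that $\bK$ together with the pointwise value $\Psi(m)$ form a complete invariant for the $\GL(T_{m}\mM)$-action induced on the fiber data $(g_{m}, T_{m}\Psi)$.

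I would begin by stripping the $m$-dependence. In a local chart around $m$ the translation $\varphi(x) = x + c$ has tangent $\bA = T_{m}\varphi = \mathbf{I}$, so \eqref{eq:general-covariance} reduces to $L(m, g_{m+c}, \Psi(m+c), T_{m+c}\Psi) = L(m+c, g_{m+c}, \Psi(m+c), T_{m+c}\Psi)$, which says exactly that $L$ does not depend explicitly on $m$. After this reduction, the residual covariance condition reads $L(\bA^{\star} g \bA, v, F\bA) = L(g, v, F)$ for every $\bA \in \GL(T_{m}\mM)$, Lorentzian metric $g$, vector $v \in V$, and surjective linear map $F \colon T_{m}\mM \to V$ with $g$-timelike kernel (the latter conditions coming from $\Psi$ being a submersion on $\mW$ with timelike current $\Jmat$).

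I would then verify directly the $\GL$-invariance of $\bK$:
\begin{equation*}
  (F\bA)(\bA^{\star} g \bA)^{-1}(F\bA)^{\star} = F\bA\bA^{-1}g^{-1}\bA^{-\star}\bA^{\star}F^{\star} = F g^{-1} F^{\star} = \bK.
\end{equation*}
To conclude that $L$ descends to a function of $(v, \bK)$ alone, it remains to prove that the $\GL$-orbit of $(g, F)$ coincides with the fiber of $(g, F) \mapsto \bK$. Given two pairs $(g_{1}, F_{1})$ and $(g_{2}, F_{2})$ with the same $\bK$, I would: first use some $\bA_{0}$ to bring both metrics to standard Minkowski form $\eta$ (possible by Sylvester's law of inertia), reducing to $g_{1} = g_{2} = \eta$; then apply the transitivity of $\OO(1,3)$ on unit timelike vectors to align the one-dimensional timelike kernels of $F_{1}$ and $F_{2}$; finally, identify the common spacelike $\eta$-orthogonal complement of that kernel, on which $F_{1}$ and $F_{2}$ become isomorphisms onto $V$. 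The equality $\bK_{1} = \bK_{2}$ translates into $M_{1}M_{1}^{T} = M_{2}M_{2}^{T}$, where $M_{i}$ is the $3\times 3$ matrix of $F_{i}$ on an $\eta$-orthonormal spacelike basis; hence $M_{1}$ and $M_{2}$ differ by right multiplication by an $\OO(3)$-transformation, which extends trivially across the kernel to the required isometry in $\OO(1,3)$.

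The main obstacle is this last step. Establishing transitivity of the $\GL$-action on fibers of $\bK$ really requires both consequences of the timelike-kernel hypothesis: the non-degeneracy of $g$ restricted to $\ker F$, so that the orthogonal splitting $T_{m}\mM = \ker F \oplus (\ker F)^{\perp_{g}}$ genuinely exists, and the positive-definiteness of $g$ on the spacelike complement, so that $M_{1}M_{1}^{T} = M_{2}M_{2}^{T}$ truly forces an $\OO(3)$-relation between the spatial frames. Both are equivalent to $\Jmat$ being timelike, which is precisely the physical requirement characterizing perfect matter. Once the orbit characterization is in place, $L$ must factor through $(\Psi(m), \bK(m))$, giving \eqref{eq:Lmatgc}.
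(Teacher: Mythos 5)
The paper does not actually prove this theorem: it states it and defers to Souriau \cite{Sou1958,Sou1964}, recording only the preliminary observation that general covariance removes the explicit dependence on $m$. There is therefore no in-text proof to compare against, but your argument is correct and self-contained, and it is essentially the orbit-classification route one would expect Souriau's proof to follow. The reduction to the pointwise invariance $L(\bA^{\star}g\bA, v, F\bA) = L(g,v,F)$ (obtained from \eqref{eq:general-covariance} by taking $\varphi$ to fix $m$ with arbitrary $T_{m}\varphi$, after first killing the $m$-dependence with translations), the verification that $\bK$ is invariant, and the claim that the fiber of $(g,F)\mapsto \bK$ is a single $\GL(T_{m}\mM)$-orbit are all sound. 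The last step is the decisive one and you locate the role of the hypotheses correctly: Sylvester's law normalizes both metrics to $\eta$, transitivity of $\OO(1,3)$ on timelike lines aligns the kernels, and $M_{1}M_{1}^{T}=M_{2}M_{2}^{T}$ gives $A'=M_{1}^{-1}M_{2}\in\OO(3)$, which extends by the identity on the kernel to an $\eta$-isometry $\bA$ with $F_{2}=F_{1}\bA$; invariance of $L$ on orbits then forces the factorization \eqref{eq:Lmatgc}. Two minor points to make explicit if you write this up: the factorization is asserted only on the admissible domain (surjective $T\Psi$ with $g$-timelike kernel) --- off that set $\bK$ degenerates and the orbit argument breaks down, consistent with the paper's restriction to perfect matter on $\mW$; and nothing is claimed about the regularity of $L^{gc}$, which would require an additional argument (a smooth local section of the orbit map).
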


The conformation is a vector valued function $\bK: \mM \to \Sym^{2}V$, where $\Sym^{2}V$ is the vector space of symmetric contravariant second-order tensors on $V$. Since $\ker T_{m}\Psi$ is timelike everywhere, $\bK(m)$ is positive definite for all $m \in \mW$. We will see in \autoref{sec:static-spacetime} how the conformation $\bK$ is related to the inverse of the right Cauchy-Green tensor of Classical Continuum Mechanics.

\begin{rem}\label{rem:cometric}
  Since the conformation $\bK$ is naturally a function of the cometric $g^{-1}$, rather than $g$, it may be useful to write the Lagrangian density as $L(g^{-1}, \Psi, T\Psi)$ rather than $L(g, \Psi, T\Psi)$.
\end{rem}

\subsection{Variations relative to the matter field and the Noether stress--energy tensor}
\label{subsec:Psi-variations}

The first variation $\delta_{\Psi} \mL=0$ with respect to the matter field, the vector-valued function $\Psi$, of the Lagrangian
\begin{equation*}
  \mL[g,\Psi] = \mH[g]+ \int L(g_{m}, \Psi(m), T_{m}\Psi)\, \vol_{g},
\end{equation*}
leads to the Euler--Lagrange equation
\begin{equation}\label{eq:EL-Psi}
  \textrm{EL}_{\Psi} := \pd{L}{\Psi}- \dive^{g} \left( \pd{L}{T\Psi}\right)=0, \qquad (\textrm{EL}_{\Psi})_{I} := \left(\pd{L}{\Psi}\right)_{I}- {{\left(\pd{L}{(T\Psi)}\right)_{I}}^{\mu}}_{; \, \mu} = 0.
\end{equation}

\begin{rem}\label{rem:QM}
  In Gauge Theory~\cite{Ble1981}, in which the matter field $\psi$ is the particle wave function of Quantum  Mechanics, the variational equation $\delta_{\psi} \mL=0$ corresponds to the \emph{wave equation} such as the Klein–Gordon equation (the relativistic version of the Schrödinger equation for spinless particles), or the Dirac equation for spin 1/2 particles.
\end{rem}

Assuming that $\Psi$ is a submersion, which is one of the main hypothesis in the theory of perfect matter, the Euler--Lagrange equation \eqref{eq:EL-Psi} is equivalent to the equation
\begin{equation*}
  \dive^{g}  \bTN{\Psi} =  0,
\end{equation*}
by theorem~\ref{thm:Noether-SE-tensor-conservation}, where
\begin{equation*}
  \bTN{\Psi} := (T\Psi)^{\star}\cdot \pd{L}{T\Psi}- L \, \Idq^{\star}, \qquad {\TNcomp{\Psi}_{\mu}}^{\nu}={(T\Psi)^{I}}_{\mu} {\left(\pd{L}{T\Psi}\right)_{I}}^{\nu} - L \, {\delta_{\mu}}^{\nu},
\end{equation*}
is the Noether stress-energy tensor.

\subsection{Variations relative to the metric and the Hilbert stress--energy tensor}

Besides the Noether stress--energy tensor, another stress--energy tensor was introduced by Hilbert in the context of General Relativity \cite{Hil1915}. It is formulated using the variational derivative with respect to the metric $g$ and defined as
\begin{equation}\label{eq:Hilbert-SE}
  \bTH:= -2\frac{\delta \mL^{\text{mat}}}{\delta g},
\end{equation}
when the variational derivative $\delta \mL/\delta g$ is a regular distribution. This tensor $\bTH$, that we shall call the \emph{Hilbert-stress--energy tensor}, is a contravariant second order tensor field, contrary to the Noether stress--energy tensor $\bTN{\Psi}$, defined in \autoref{subsec:Psi-variations} as a mixed tensor field. It is the source term in the Einstein equation
\begin{equation*}
  2\kappa\, \frac{\delta \mH}{\delta g}=\Ein_{g}^{\sharp}+ \Lambda\, g^{-1} =\kappa\, \bTH,
\end{equation*}
obtained as the variational stationary equation $\delta_{g} \mL=0$ for the full Lagrangian $\mL[g,\Psi]$. Here
\begin{equation*}
  \Ein_{g}^{\sharp} = g^{-1}\left(\Ric_{g}-\frac{1}{2}R_{g}\, g\right) g^{-1}
\end{equation*}
is the contravariant form of the Einstein tensor and $\Ric_{g}$ the Ricci tensor. A direct consequence of the general covariance of the Hilbert-Einstein functional $\mH$ is that~\cite{Noe1918}
\begin{equation*}
  \dive^{g} (\Ein_{g}^{\sharp}+ \Lambda\, g^{-1}) = 0,
\end{equation*}
where $\dive^{g}$ is the divergence with respect to the Universe metric $g$. It implies the conservation law~\cite{Ein1915,Wey1917}
\begin{equation}\label{eq:divT}
  \dive^{g} \bTH=0,
\end{equation}
for every solution of the Einstein equation, and is a relativistic extension of the Classical Continuum Mechanics equilibrium equation~\eqref{eq:MMC-Euler--Lagrange-bis}.

\begin{rem}\label{rem:deltaPsi}
  The impressive observation is that for a general covariant Lagrangian $\mL^{\text{mat}}[g,\Psi]$, with density $L^{gc}(\Psi(m), \bK(m))$, we get
  \begin{equation*}
    \bTN{\Psi} = g \bTH,
  \end{equation*}
  Therefore, in the case of general covariant Lagrangian describing perfect matter (see \autoref{sec:variational-calculus}), we have
  \begin{equation}\label{eq:equivA}
    \delta_{\Psi} \mL=0 \iff \textrm{EL}_{\Psi}=0 \iff \dive^{g}  \bTN{\Psi} = 0 \iff \dive^{g} \bTH =0
    \iff \delta_{g} \mL=0,
  \end{equation}
  since the metric is parallel, and $\delta_{\Psi} \mL=0$ does not lead to a new equation.
\end{rem}

\section{Introduction of an observer}
\label{sec:static-spacetime}

In General Relativity, there is no intrinsic definition of \emph{space} and \emph{time}. In order to connect the relativistic framework with the classical one (a Galilean structure on $\mM$~\cite{Kuen1972,Kuen1976}), we need first to introduce a \emph{time function} $\hat{t}$ on $\mM$. This time function is assumed to have a timelike gradient everywhere (and is thus a submersion). It is usually interpreted as the introduction of an observer. When such a function exists, it allows to define a \emph{spacetime structure} or \emph{$(3+1)$-structure} on $\mM$ (see~\cite{Dar1927,Lic1939,FBCB1948,Lic1952,Fou1956,ADM1962,Yor1979,Gou2012}). This structure corresponds to a foliation of $\mM$ by spacelike hypersurfaces
\begin{equation*}
  \espace_{t} := \set{m \in \mM;\; \hat{t}(m) = t}.
\end{equation*}
The unit timelike normal $\NN$ to $\espace_{t}$ is defined as
\begin{equation}\label{eq:defN}
  \NN := -\mathcal{N} \gradg \hat{t},
  \qquad
  \mathcal{N} := \left(- g\big(\gradg \hat{t},\gradg \hat{t}\big) \right)^{-1/2},
\end{equation}
where $\mathcal{N}$ is called the \emph{lapse function}~\cite{ADM1962,Gou2012} and the gradient is relative to the metric $g$. The minus sign is chosen so that the quadrivector $\NN$ is \emph{future-oriented}, meaning that the value of $\hat{t}$ increases along the flow curves of $\NN$.

We shall introduce the 3D submanifolds of the World tube $\omega_{t}:= \mW \cap \espace_{t}$ and denote by $\Psi_{t}$,  the restriction of $\Psi$ to $\omega_{t}$. Here, the submanifolds $\omega_{t}$, parameterized by time $t$ play the same role as the configuration systems in Classical Continuum Mechanics, see \autoref{fig:foliationOmegat}. Since we have assumed that $\Psi$ is a submersion and that $\ker T\Psi$ is timelike, the tangent map $T\Psi_{t} \colon T \omega_{t} \to T\body$ is invertible. In practice, there is no obstruction to assume that $\Psi_{t} \colon \omega_{t} \to \body$ is moreover a diffeomorphism.

\begin{figure}[ht]
  \centering
  \includegraphics[width=12cm]{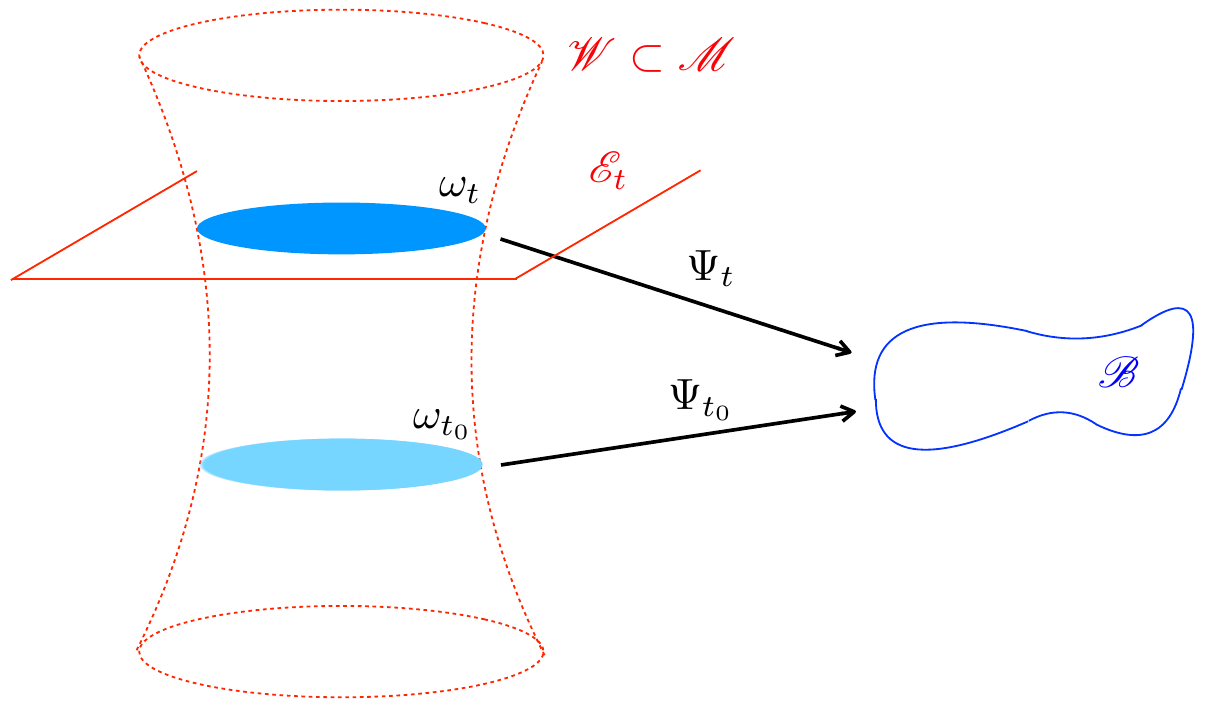}
  \caption{The foliation of the World tube $\mW$ by spacelike hypersurfaces $\omega_{t}$.}
  \label{fig:foliationOmegat}
\end{figure}

The orthogonal decomposition with respect to $\NN$ of the unit timelike quadrivector $\UU$ (defined by~\eqref{eq:def-rho}) is written as
\begin{equation}\label{eq:UN}
  \UU = \gamma \left(\NN+ \frac{\uu}{c}\right), \qquad g(\NN, \uu) = 0.
\end{equation}
It allows to define the (generalized) \emph{Lorentz factor} $\gamma$ as the scalar product~\cite{Gou2012}
\begin{equation}\label{eq:gamma}
  \gamma: = - g(\UU, \NN) = \frac{1}{\sqrt{1-\frac{\norm{\uu}_{g}^{2}}{c^{2}}}} ,
\end{equation}
and to interpret $\uu$ as the relativistic 3D Eulerian velocity~\cite{Sou1964,KD2023}. Furthermore, from $T\Psi. \UU=0$, we get that
\begin{equation*}
  T\Psi.\left(\NN+ \frac{\uu}{c}\right) = 0, \quad \text{with} \quad T\Psi.\NN = -\mathcal{N} \partial_{t} \Psi,
\end{equation*}
and we obtain the following expression for the relativistic Eulerian velocity,
\begin{equation}\label{eq:uRG}
  \uu = -\frac{1}{\mathcal{N}}\, T\Psi_{t}^{-1}.\partial_{t} \Psi.
\end{equation}

\subsection{Static spacetimes}
\label{subsec:static-spacetimes}

In order to make closer the connection with Classical Continuum Mechanics, we choose first to adopt the \emph{passive matter} hypothesis, meaning that the presence of a continuous medium/structure does not affect the Universe metric $g$, which can then be assumed as given solution of the Einstein equations in the vacuum (a so-called exterior solution). We shall assume moreover that this metric $g$ is \emph{static}~\cite{Lic1955,HE1973,MTW1973,MG2009}, which means that
\begin{enumerate}
  \item there exists a Killing vector field $\MM$ on the Universe (\emph{i.e.} $\Lie_{\MM}g = 0$) which is timelike everywhere (\emph{stationary metric} \textit{à la} Lichnerowicz~\cite{Lic1955}),
  \item and moreover that $\MM = -\gradg \hat{t}=-g^{-1} \rd \hat{t}$ is the gradient (with respect to the metric $g$) of a \emph{time} function~$\hat{t}$, the minus sign making it future oriented.
\end{enumerate}

In that case, the flow of the vector field $\MM$ acts as a one-parameter sub-group $G$ of isometries for the metric $g$ (which could be interpreted as time translations). We will make the following further global hypotheses which seem reasonable for our purpose (some of the following assertions may be redundant, see~\cite{Har2015} and bibliography herein for more details on stationary and static spacetimes):
\begin{enumerate}
  \item The Killing vector field $\MM$ is \emph{complete}, meaning that integral curves are all defined on $(-\infty,+\infty)$;
  \item The one-parameter subgroup $G$ generated by the flow of $\MM$ acts freely on $\mM$;
  \item Each integral curve of $\MM$ is a line (not a circle) and intersects each hypersurface $\espace_{t}$ in one and only one point;
  \item $G$ permutes the 3D hypersurfaces~$\espace_{t}$.
\end{enumerate}

The quotient space $\espace := \mM/G$ is a smooth three-dimensional manifold, which can be interpreted as the 3D space of Classical Mechanics. Moreover, under our global hypothesis, the following map
\begin{equation}\label{eq:universe-trivialization}
  \btau \colon \mM \to \RR \times \espace, \qquad m \to (\hat{t}(m), \pi(m))
\end{equation}
where $\pi \colon \mM \to \espace$ is the canonical projection map, is a diffeomorphism which allows to trivialize the universe as a global spacetime \emph{i.e.}, (3+1)-structure.

The Universe metric $g$ induces then a ---not necessarily flat--- Riemannian metric $\bar{g}$ on the quotient space $\espace$ such that the canonical projection
\begin{equation}\label{eq:pi}
  \pi \colon (\mM,g) \to (\espace, \bar{g})
\end{equation}
is a Riemannian submersion~\cite{GHL2004}. In particular, its restriction to each 3D submanifold $\pi \colon \espace_{t} \to \espace$ is a Riemannian isometry.

Given a static metric $g$, it is always possible to find a local coordinate system $(x^{\mu})$ such that
\begin{equation}\label{eq:staticxmu}
  \pd{g_{\mu\nu}}{x^{0}} = 0, \quad \text{and} \quad g_{0i} = 0 \quad (1 \le i \le 3).
\end{equation}
In practice, such a coordinate system is introduced \textit{a priori}~\cite{MG2009} and is chosen such that the time function is $\hat{t}= x^{0}/c$, where $c$ is the speed of light. The spacelike hypersurfaces $\espace_{t}$ correspond thus to the level sets $x^{0}=ct$. By \eqref{eq:defN}, the lapse function is then $\mathcal{N} = \sqrt{-g_{00}}$ and
\begin{equation*}
  \frac{\partial}{\partial x^{0}} = \frac{1}{c} \frac{\partial}{\partial t}=\mathcal N\, \NN.
\end{equation*}
so that the metric can be expressed as
\begin{equation*}
  g = -\NN^{\flat} \otimes \NN^{\flat} + \mg = -\mathcal{N}^{2} c^{2} \rd t^{2} + \mg,
  \qquad
  \mg \NN=0,
\end{equation*}
where $\mg= g_{ij} \rd x^{i} \rd x^{j}$ is the spatial part of $g$. Finally, the 4D volume form associated to the Universe metric $g$ can  be expressed as
\begin{equation*}
  \vol_{g} = - \NN^{\flat} \wedge  \vol_{\mg}= \mathcal{N} c\, \rd t \wedge \vol_{\mg},
  \quad \text{where} \quad
  \vol_{\mg} := i_{\NN} \vol_{g}.
\end{equation*}
Note that $\vol_{\mg}$ is a time independent volume form on each spatial hypersurface $\espace_{t}$.

Examples of static spacetimes (satisfying our global hypotheses) are given by the flat \emph{Minkowski spacetime} and the (exterior) \emph{Schwarzschild spacetime} (to take account of gravity, as in~\cite{KD2023}).

\begin{exam}[The Schwarzschild spacetime]\label{ex:Schwarzschild}
  It was formulated in~\cite{Sch1916} to describe the gravitational field around a spherical celestial object such as the sun and used to prove the first experimental confirmations of General relativity. The Schwarzschild metric can be expressed as
  \begin{equation*}
    g = -\mathcal{N}(r) \, c^{2} \rd t^{2}+k(r) \,\bq,
    \qquad
    \begin{cases}
      \bq = \delta_{ij} \rd x^{i} \rd x^{j},
      \\
      r=\sqrt{\delta_{ij} x^{i} x^{j}},
    \end{cases}
  \end{equation*}
  where $\mg =k(r) \, \bq$ is conformal to the Euclidean metric in the Cartesian isotropic coordinates $(x^{0}=ct, x^{i})$. The associated 4D volume form on $\mM$ is
  \begin{equation*}
    \vol_{g}= \mathcal{N} k^{\frac{3}{2}} c \rd t \wedge \vol_{\bq},
  \end{equation*}
  and we refer to \cite{MG2009,KD2023} for the corresponding expressions of the lapse function $\mathcal{N}=\mathcal{N}(r)$ and of the conformal factor $k=k(r)$. Let us simply recall that~\cite{MG2009}
  \begin{equation}\label{eq:Schwarzschild}
    \mathcal{N}= 1 -\frac{1}{c^{2}} \frac{GM}{r_{0}}+O\left(\frac{1}{c^{4}}\right) \approx 1,
    \qquad
    k= 1 +\frac{2}{c^{2}} \frac{GM}{r_{0}}+O\left(\frac{1}{c^{4}}\right) \approx 1,
  \end{equation}
  at the surface of a planet of mass $M$ and of radius $r_{0}$, where $G$ is the gravitational constant and $c$ the speed of light.
\end{exam}

\begin{exam}[The Minkowski spacetime]\label{ex:Minkowski}
  It corresponds to Special Relativity, which is defined as the affine space $\RR^{4}$ endowed with the flat Lorentzian metric
  \begin{equation*}
    g = \eta :=-c^{2} \rd t^{2}+\bq,
  \end{equation*}
  where $\mg = \bq = \delta_{ij} \rd x^{i} \rd x^{j}$ is the Euclidean metric, and the lapse function is $\mathcal{N}=1$. In that case, the Killing field is $\MM = \frac{\partial}{\partial t}$ and the 3D quotient space $\espace=\mM/G$ is endowed with the Riemannian metric $\bar{g}=\bq$. The 4D volume form on $\mM$ is then given by
  \begin{equation*}
    \vol_{g}= \vol_{\eta}= c \rd t \wedge \vol_{\bq}.
  \end{equation*}
  Here $\MM = \frac{\partial}{\partial t}$ and $G$ corresponds really to times translations, and the quotient space $(\espace,\bar{g})$ corresponds to the affine Euclidean space of Classical Continuum Mechanics with $\bar{g}=\bq$. In that case, the trivialization $\btau$ is a Riemannian isometry between $(\mM,\eta)$ and the product manifold
  \begin{equation*}
    (\RR, -c^{2} {\rd t}^{2}) \times (\espace,\bq).
  \end{equation*}
\end{exam}

\subsection{The observer as a matter field}
\label{subsec:observer-matter-field}

Following Souriau~\cite[Section 39]{Sou1964}, when a static spacetime is involved, it is pertinent to give a more precise definition of an ``observer'', not only as a time function (as it seems common in General Relativity), but as a special ---observer--- perfect matter field $\Psi_{0}$, ``linked'' with the spacetime structure induced by the time function $\hat{t}$. More precisely, we introduce such a matter field as a vector valued function $\Psi_{0} \colon \mM \to V$ which satisfies moreover
\begin{equation*}
  \Lie_{\MM}\Psi_{0} = 0,
\end{equation*}
where $\MM = -\gradg \hat{t}$ is the Killing vector defining the static spacetime structure. This ``observer'' (which can be assimilated as the laboratory) is thus synchronized with the time function $\hat{t}$ (its proper time coincides with $t$), since $T\Psi_{0}. \gradg \hat{t} = 0$. The world lines of $\Psi_{0}$ coincide with the flow lines of $\MM$. In that case, $\Psi_{0}$ induces a diffeomorphism
\begin{equation*}
  \overline{\Psi}_{0} \colon \espace  \to V, \quad \text{such that} \quad \Psi_{0} = \overline{\Psi}_{0} \circ \pi,
\end{equation*}
in such a way that the following diagram commutes
\begin{equation*}
  \xymatrix{
  \mM \ar[d]^{\pi} \ar[r]^{\Psi_{0}} & V  \\
  \espace  \ar@/_/[ur]_{\overline{\Psi}_{0}}  }
\end{equation*}

In this framework, we can then propose a relativistic generalization of the Classical Continuum Mechanics reference embedding
\begin{equation*}
  \pp_{0}:={\overline\Psi}_{0}^{\, -1} \colon \body \to \espace,
\end{equation*}
which corresponds to the diagram
\begin{equation*}
  \xymatrix{
  \mM \ar[d]^{\pi} \ar[r]^{\Psi_{0}} &
  V \ar@/^/[dl]^{\pp_{0}} \\
  \espace
  }
\end{equation*}
Then, in accordance with classical 3D notations, we shall introduce
\begin{equation*}
  \Omega_{0} := \pp_{0}(\body)  \subset \espace,
\end{equation*}
interpreted as the reference configuration system embedded in the three-dimensional Riemannian manifold $(\espace, \bar{g})$, and
\begin{equation*}
  \bF_{0} := T\pp_{0},
\end{equation*}
the linear tangent map of $\pp_{0} = {\overline{\Psi}_{0}}^{-1}$.

\begin{rem}
  In practice, it is often assumed that the restriction of the observer matter field $\Psi_{0}$ to the spacelike hypersurface $\espace_{t_{0}}$ coincides with $\Psi_{t_{0}}$ at some ``initial date'' $t=t_{0}$. In that case we get
  \begin{equation*}
    \Omega_{0} = \pi(\omega_{t_{0}}).
  \end{equation*}
\end{rem}

\subsection{Effective matter fields and relativistic deformation}
\label{subsec:effective-matter-field}

The choice of an observer matter field~$\Psi_{0}$, and the induced diffeomorphism $\pp_{0} = {\overline{\Psi}_{0}}^{-1} \colon V \to \espace$, allows to replace the field variable $\Psi \colon \mM \to V$ by the field variable $\widetilde{\Psi} \colon\mM \to \espace$~\cite{Sou1964}, where
\begin{equation}\label{eq:tildePsi}
  \widetilde{\Psi} := \pp_{0} \circ \Psi,
\end{equation}
is called the \emph{effective matter field}. It is defined on the 4D Universe $\mM$ but with values in the 3D manifold $\espace$, instead of the vector space $V$ for $\Psi$.

\begin{rem}
  There is an apparent complexification of the problem since $V$ is assumed to be a vector space but $\espace$ is only a manifold. Note however, that if we can assume that $\overline{\Psi}_{0} \colon \espace \to V$ is a diffeomorphism, then $\espace$ inherits the structure of an affine space. This remark is important since we deal with a new variable $\widetilde{\Psi}$ of a variational calculus.
\end{rem}

The restriction of the effective matter field $\widetilde{\Psi}$ to the open subset $\omega_{t}$ of $\espace_{t}$
\begin{equation*}
  \widetilde{\Psi}_{t} \colon  \omega_{t} \to  \Omega_{0}= \pp_{0}(\body), \qquad \widetilde{\Psi}_{t} := \pp_{0} \circ \Psi_{t},
\end{equation*}
is a diffeomorphism. It allows us to define the \emph{relativistic three-dimensional deformation} $\phi=(\phi_{t}=\phi(t))$, parameterized by time $t$, and given by
\begin{equation}\label{eq:phiRG}
  \phi_{t} \colon \Omega_{0} \to \Omega_{t}:=\pi(\omega_{t}),
  \qquad
  \phi_{t} := \pi \circ {\widetilde{\Psi}_{t}}^{-1},
\end{equation}
The 3D submanifolds of $\espace$,
\begin{equation*}
  \Omega_{t}=\pi(\omega_{t})
  \quad \text{and} \quad
  \Omega_{0}= \pi(\omega_{t_{0}})
\end{equation*}
correspond to the deformed and reference configuration systems of Continuum Mechanics (or $\Omega_{\pp(t)}$ and $\Omega_{0}$ in \autoref{sec:classical-elastodynamics}). Similarly, the mapping $\pp=(\pp_{t}=\pp(t))$, parameterized by time,
\begin{equation}\label{eq:p}
  \pp_{t} := \pi \circ \Psi_{t}^{-1} \colon \body \to \Omega_{t} \subset \espace
\end{equation}
is an embedding, interpreted in Continuum Mechanics as the embedding of the body $\body$ into the Euclidean space $(\espace, \bq)$. The mappings $\pp$ and $\phi$ are linked by the relation
\begin{equation*}
  \phi_{t} = \pp_{t} \circ \pp_{0}^{-1},
\end{equation*}
identical to \eqref{eq:phi3D} for Classical Continuum Mechanics, where $\pp_{0} = {\overline{\Psi}_{0}}^{-1}$ can be interpreted as a reference configuration.

\begin{rem}
  Assuming that the restriction of the observer field $\Psi_{0}$ to the spacelike hypersurface $\omega_{t_{0}}$ coincides at an initial time $t=t_{0}$ with $\Psi_{t_{0}}$, we get the common identities of Classical Continuum Mechanics,
  \begin{equation*}
    \pp_{t_{0}} = \pp_{0}, \qquad  \Omega_{t_{0}} = \Omega_{0}  \quad \text{and} \quad \phi_{t_{0}} = \id,
  \end{equation*}
  but in a relativistic framework.
\end{rem}

The corresponding linear tangent maps are given by
\begin{equation*}
  \bF := T\pp_{t} \colon T\body \to T\Omega_{t}, \qquad \bF_{\phi} := T\phi_{t} \colon  T\Omega_{0} \to T\Omega_{t},
\end{equation*}
and we have
\begin{equation*}
  \bF^{-1} = T\Psi_{t}, \qquad \bF_{\phi} = \bF \bF_{0}^{-1} = (T\widetilde{\Psi}_{t})^{-1},
\end{equation*}
so that $\bF_{\phi}$ appears as a relativistic version of the so-called deformation gradient in Classical Continuum Mechanics.

One can furthermore introduce a \emph{relativistic four-dimensional deformation $\Phi$} as
\begin{equation}\label{eq:defPhi}
  \Phi \colon  \RR \times \Omega_{0} \to \RR \times \espace,
  \qquad
  (t, \xx_{0}) \mapsto  (t, \xx)= \Phi(t, \xx_{0})= (t, \phi_{t}(\xx_{0})),
\end{equation}
which might be seen as a diffeomorphism between open subsets of $\mM$, using the trivialization~\eqref{eq:universe-trivialization}. Since
\begin{equation*}
  \widetilde{\Psi}(t,\phi_{t}(\xx_{0})) = \xx_{0}, \qquad \forall t \in \RR, \, \forall \xx_{0} \in \Omega_{0},
\end{equation*}
by the very definition of $\phi_{t}$ and using the trivialization~\eqref{eq:universe-trivialization}, we get that
\begin{equation*}
  \Phi^{-1}(t, \xx)= (t,  \widetilde{\Psi}_{t}(\xx)),
\end{equation*}
and, in time-space block notation,
\begin{equation}\label{eq:TPhiblock}
  T\Phi = \begin{bmatrix}
    1                & 0          \\
    \partial_{t}\phi & \bF_{\phi}
  \end{bmatrix}
  = \begin{bmatrix}
    1                                                    & 0          \\
    -\bF_{\phi}.(\partial_{t}\widetilde{\Psi})\circ \Phi & \bF_{\phi}
  \end{bmatrix}
\end{equation}
since
\begin{equation*}
  \partial_{t}\widetilde{\Psi}(t,\phi_{t}(\xx_{0})) + (T\widetilde{\Psi}_{t})\partial_{t}\phi = 0 \quad \text{and} \quad T\widetilde{\Psi}_{t} . \bF_{\phi} = \bI_{3}.
\end{equation*}

By~\eqref{eq:uRG}, we get the following equivalent expressions of the relativistic Eulerian velocity
\begin{equation*}
  \uu(t) = -\frac{1}{\mathcal{N}} T\Psi_{t}^{-1}.\partial_{t} \Psi = -\frac{1}{\mathcal{N}} T\widetilde{\Psi}_{t}^{-1}.\partial_{t} \widetilde{\Psi}
  =\frac{1}{\mathcal{N}} \partial_{t} \phi \circ \Phi^{-1},
\end{equation*}
either in terms of the matter field $\Psi$, the effective matter field $\widetilde{\Psi}$, or the deformation $\phi$.

Introducing the (relativistic) Lagrangian velocity $\VV:=\partial_{t} \phi$, we shall summarize the notions of relativistic Eulerian and of Lagrangian velocities with the following diagram
\begin{equation*}
  \xymatrix{
  T \Omega_{0} \ar[r]^{\bF_{\phi}} \ar[d]  & {T \Omega_{t}}\ar[d] &
  \\
  \Omega_{0} \ar[ur]^{\qquad\VV} \ar[r]^{\phi_{t}}    &  \Omega_{t}  \ar@/_1pc/[u]_{\mathcal{N} \uu(t)} & \hskip -15mm
  }
\end{equation*}
which shows that the deformation $\phi_{t}$, its tangent map $\bF_{\phi}=T\phi_{t}$, the Lagrangian and Eulerian velocities $\bV=\partial_{t}\phi$ and $\mathcal{N} \uu$ are the relativistic generalization of the same classical notions. Recall that $\mathcal{N}=1$ for Minkowski spacetime and that $\mathcal{N} \approx 1$ (by Schwarzschild lapse function~\eqref{eq:Schwarzschild}) at the surface of a planet.

\subsection{Effective conformation}
\label{subsec:effective-conformation}

To make a closer link with Classical Continuum Mechanics, Souriau has further introduced the \emph{effective conformation}~\cite[\S 39]{Sou1964}
\begin{equation}\label{eq:KtildeA}
  \widetilde{\bK} := (T\widetilde{\Psi})\, g^{-1} (T\widetilde{\Psi})^{\star},
  \qquad
  \mM \to \Sym^{2}T^{\star}\espace.
\end{equation}
We have then
\begin{equation*}
  \widetilde{\bK}= T\widetilde{\Psi}_{t}\left(\mg^{\sharp} - \frac{1}{c^{2}} \uu \otimes \uu \right)(T\widetilde\Psi_{t})^{\star},
\end{equation*}
and
\begin{equation}\label{eq:KtildeB}
  \widetilde{\bK} \circ \Phi
  =  \bF_{\phi}^{-1} \left(\bar{g}^{-1}-  \frac{1}{\mathcal{N}^{2} c^{2}} \VV \otimes \VV \right)\bF_{\phi}^{-\star}
  =  (\bF_{\phi}^{\star}\,\bar{g}\,\bF_{\phi})^{-1}-  O\left(\frac{1}{c^{2}}\right),
\end{equation}
which is defined on $\RR \times \Omega_{0}$.

\begin{rem}
  Assuming that the observer matter field $\Psi_{0}$ and the matter field $\Psi$ coincide at an initial date $t=t_{0}$ on the spacelike manifold $\omega_{t_{0}}$, we have  $\phi(t=t_{0})=\id$ and $\bF_{\phi}(t=t_{0}) = \bI_{3}$,
  so that
  \begin{equation*}
    \widetilde{\bK}(t=t_{0}) = \bar{g}^{-1}+ O\left(\frac{1}{c^{2}}\right).
  \end{equation*}
\end{rem}

This last expression allows us to recognize $\bF_{\phi}^{\star}\,\bar{g}\,\bF_{\phi}$ as the right Cauchy--Green tensor, when $\mathcal{N} = k = 1$ and $\bar{g}=\bq$ is the Euclidean metric. This is the case in Special Relativity. For this reason, the second order time dependent covariant tensor field
\begin{equation}\label{eq:genC}
  \bC:=\widetilde{\bK}^{-1}\!\!\circ \Phi,
\end{equation}
defined on $\RR \times \Omega_{0}$ can be considered as the relativistic generalization of the right Cauchy--Green tensor for more general static spacetime (satisfying our global hypothesis of \autoref{subsec:static-spacetimes}).

\subsection{Mass conservation for static spacetimes}
\label{subsec:mass-conservation}

In a static spacetime, the metric $g$ can be written as
\begin{equation*}
  g = -\mathcal{N}^{2} c^{2} \rd t^{2}+\mg,
\end{equation*}
where $\mathcal{N}$ is the laps function, and $\mg$ is the 3D Riemannian metric on the spatial hypersurfaces. The 4D volume form on $\mM$ is written thus as
\begin{equation*}
  \vol_{g} = \mathcal{N} c\, \rd t \wedge \vol_{\mg}.
\end{equation*}
Since $\pi \colon (\mM,g) \to (\espace, \bar{g})$ is a Riemannian submersion and $\ker T \pi$ is generated by $\grad \hat{t}$, we have that $\pi^{\star} \bar{g} =\mg$ and $\pi^{\star} \vol_{\bar{g}} = \vol_{\mg}$, where $\bar{g}$ is the induced Riemannian metric on $\espace$.

The quadrivelocity $\UU$ defined in~\eqref{eq:def-rho}, and the Lorentz factor $\gamma$, defined in~\eqref{eq:gamma}, are given respectively by
\begin{equation*}
  \UU = \frac{\gamma}{c}\left(\frac{1}{\mathcal{N}} \pd{}{t} + \uu\right),
  \quad \text{and} \quad
  \gamma = \frac{1}{\sqrt{1-\frac{\norm{\uu}_{\mg}^{2}}{c^{2}}}}.
\end{equation*}
The \emph{relativistic mass density} is defined as \cite{KD2023}
\begin{equation}\label{eq:relativistic-mass-density}
  \rho:= \rho_{r} \gamma,
\end{equation}
where $\rho_{r}$ was defined as the rest mass density and related to the current of matter~\eqref{eq:def-P} by~\eqref{eq:def-rho}. Introducing a reference embedding $\pp_{0}=\overline \Psi_{0}^{-1}\colon \body \to \espace$, we get
\begin{equation*}
  (\pp_{0})_{*}\mu = \rho_{0} \vol_{\bar{g}},
\end{equation*}
where the \emph{reference mass density} $\rho_{0}=\rho_{0}(\xx_{0})$ is defined on the reference configuration $\Omega_{0}\subset \espace$.  We have then the following result.

\begin{lem}[Mass conservation]\label{lem:mass-conservation}
  Let $g$ be a static metric and $\Phi$, the relativistic transformation introduced in~\eqref{eq:defPhi}. Then, a relativistic mass conservation law can be expressed as
  \begin{equation*}
    \left(\rho \circ \Phi\right) J_{\Phi}^{g} = \rho_{0} \circ \pi,
  \end{equation*}
  where $\pi \colon \mM \to \espace$ is the projection map and the Jacobian $J_{\Phi}^{g}$ is defined intrinsically by the relation $\Phi^{*}\vol_{g} = J_{\Phi}^{g} \, \vol_{g}$.
\end{lem}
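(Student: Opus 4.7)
The plan is to pull the defining relation for the matter current through the relativistic deformation $\Phi$, and to exploit the key identity $\widetilde{\Psi}\circ\Phi = \pi$ that falls out of the construction $\phi_{t} = \pi\circ\widetilde{\Psi}_{t}^{-1}$.

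First, I would assemble the master identity on $\mW$. Starting from $\Psi^{*}\mu = i_{\Jmat}\vol_{g}$ (equation \eqref{eq:def-P}) and the reference-density relation $(\pp_{0})_{*}\mu = \rho_{0}\vol_{\bar{g}}$, which is equivalent to $\mu = \pp_{0}^{*}(\rho_{0}\vol_{\bar{g}})$, and using $\widetilde{\Psi} = \pp_{0}\circ\Psi$, one obtains
\begin{equation*}
  i_{\Jmat}\vol_{g} = \widetilde{\Psi}^{*}(\rho_{0}\vol_{\bar{g}}) \qquad \text{on } \mW.
\end{equation*}

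Second, I would pull back this identity by $\Phi$. For the right-hand side the key observation is that, in the trivialization~\eqref{eq:universe-trivialization}, $\widetilde{\Psi}\circ\Phi(t,\xx_{0}) = \widetilde{\Psi}_{t}(\phi_{t}(\xx_{0})) = \xx_{0} = \pi(t,\xx_{0})$, so
\begin{equation*}
  \Phi^{*}\widetilde{\Psi}^{*}(\rho_{0}\vol_{\bar{g}}) = (\widetilde{\Psi}\circ\Phi)^{*}(\rho_{0}\vol_{\bar{g}}) = \pi^{*}(\rho_{0}\vol_{\bar{g}}) = (\rho_{0}\circ\pi)\,\pi^{*}\vol_{\bar{g}}.
\end{equation*}
For the left-hand side, since $\Phi$ is a diffeomorphism, $\Phi^{*}(i_{\Jmat}\vol_{g}) = i_{\Phi^{*}\Jmat}(\Phi^{*}\vol_{g}) = J_{\Phi}^{g}\,i_{\Phi^{*}\Jmat}\vol_{g}$ by the definition of $J_{\Phi}^{g}$.

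Third, I would compute $\Phi^{*}\Jmat$. Using $\Jmat = \rho_{r}\UU$ with $\UU = \frac{\gamma}{c}\bigl(\frac{1}{\mathcal{N}}\partial_{t}+\uu\bigr)$, the relation $\uu\circ\Phi = \frac{1}{\mathcal{N}\circ\phi_{t}}\VV$ following from~\eqref{eq:uRG}, and the block form~\eqref{eq:TPhiblock} of $T\Phi$ together with its inverse, a direct matrix computation shows that the spatial part of $\Phi^{*}\Jmat$ cancels and
\begin{equation*}
  \Phi^{*}\Jmat \;\propto\; \partial_{t},
\end{equation*}
with coefficient proportional to $\rho\circ\Phi$. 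Combining with $i_{\partial_{t}}\vol_{g} = \mathcal{N} c\,\vol_{\mg} = \mathcal{N} c\,\pi^{*}\vol_{\bar{g}}$ on the reference slice, and equating both sides of the pulled-back identity on the common 3-form $\pi^{*}\vol_{\bar{g}}$, yields the claimed scalar relation $(\rho\circ\Phi)J_{\Phi}^{g} = \rho_{0}\circ\pi$.

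The main obstacle is the vector-field pullback step: showing that $\Phi^{*}\Jmat$ is a pure temporal (observer-parallel) field with coefficient exactly $(\rho\circ\Phi)/(c\,\mathcal{N}\circ\phi_{t})$. This hinges on the algebraic cancellation between the spatial component of $\UU$ (carrying $\uu$) and the off-diagonal $\VV$-block of $(T\Phi)^{-1}$, which is precisely the relativistic version of the fact that in the reference frame the matter is ``at rest'' with respect to the observer. The subsequent bookkeeping of $\mathcal{N}$ factors between $\Phi^{*}\mathcal{N} = \mathcal{N}\circ\phi_{t}$ and the lapse evaluated on $\Omega_{0}$ then balances the $\rho_{r}$--$\rho$ and $\vol_{g}$--$\vol_{\mg}$ conversions, recovering the intended conservation law. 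As a sanity check, in Minkowski spacetime where $\mathcal{N}\equiv 1$ one recovers $J_{\Phi}^{g} = \det\bF_{\phi}$ and the classical statement $(\rho\circ\phi)\det\bF_{\phi} = \rho_{0}$ immediately.
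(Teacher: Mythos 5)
Your proposal is correct and follows essentially the same route as the paper's proof: both sides compute $\Phi^{*}(i_{\Jmat}\vol_{g})$ twice, once via $\Phi^{*}\Psi^{*}\mu=\pi^{*}(\rho_{0}\vol_{\bar{g}})$ using $\widetilde{\Psi}\circ\Phi=\pi$, and once by showing $\Phi^{*}\Jmat$ reduces to a pure $\partial_{t}$ multiple with coefficient $\bigl(\tfrac{\rho}{\mathcal{N}c}\circ\Phi\bigr)$. The only cosmetic difference is that the paper obtains the latter by recognizing $\Jmat=\tfrac{\rho}{\mathcal{N}c}\,\Phi_{*}(\partial_{t})$ directly, whereas you invert the block form of $T\Phi$; the cancellation you anticipate does go through exactly as claimed.
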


\begin{rem}
  The present formulation of mass conservation on the reference configuration system $\Omega_{0}\subset \espace$ generalizes the expression \eqref{eq:MC}, $\rho_{0}= (\rho \circ \phi) J_{\phi} = (\rho \circ \phi) \det \bF_{\phi}$, for Classical Continuum Mechanics, for which $\espace$ is the affine Euclidean space.
\end{rem}

\begin{proof}
  We have
  \begin{equation*}
    \Phi_{*}\left(\pd{}{t}\right) = T \Phi.\pd{}{t} \circ \Phi^{-1} = \pd{}{t} + \mathcal{N} \uu,
  \end{equation*}
  and since
  \begin{equation*}
    \Jmat = \rho_{r} \UU = \frac{\gamma}{c}\rho_{r}\left(\frac{1}{\mathcal{N}}\pd{}{t} + \uu\right) = \frac{\rho}{\mathcal{N}c}\left(\pd{}{t} + \mathcal{N}\uu\right),
  \end{equation*}
  we get thus
  \begin{equation*}
    \Phi^{*}\Jmat = \left(\frac{\rho}{\mathcal{N} c} \circ \Phi\right)  \pd{}{t}.
  \end{equation*}
  We have therefore
  \begin{equation*}
    \Phi^{*}(i_{\Jmat} \vol_{g}) = i_{\Phi^{*}\Jmat} \, \Phi^{*}\vol_{g} = \left(\frac{\rho}{\mathcal N c} \circ \Phi\right) J_{\Phi}^{g} \, i_{\pd{}{t}} \vol_{g} = (\rho \circ \Phi)J_{\Phi}^{g} \vol_{\mg},
  \end{equation*}
  on one hand, whereas, on the other hand,
  \begin{equation*}
    \Phi^{*}(i_{\Jmat} \vol_{g}) = \Phi^{*} \Psi^{*}\mu = \Phi^{*}\widetilde{\Psi}^{*} (\rho_{0} \vol_{\bar{g}}) = \pi^{*}(\rho_{0} \vol_{\bar{g}}) = (\rho_{0} \circ \pi) \vol_{\mg}.
  \end{equation*}
  We get thus $(\rho \circ \Phi) J_{\Phi}^{g} = \rho_{0} \circ \pi$, which achieves the proof.
\end{proof}

\section{From variations with respect to the matter field to configurational forces balance}
\label{sec:relativistic-configurational-forces}

The Lagrangian $\mL[g,\Psi] = \mH[g] + \mL^{\text{mat}}[g,\Psi]$, introduced in \autoref{sec:relativistic-elastodynamics} depends on two fundamental variables, the metric $g$ of the universe and the matter field $\Psi$, which represents perfect matter. The first variation with respect to the metric $g$ leads to Einstein's equation, and to the conservation law $\dive^{g} \bTH=0$, for Hilbert stress-energy tensor. It has been shown in~\cite{Sou1958,Sou1964} for Minkowski's spacetime and in~\cite{KD2023} for Schwarzschild's spacetime, that as $c \to \infty$, one recovers from it, Newton's momentum balance of Classical Continuum Mechanics, the latter with gravity. The first variation of the Lagrangian with respect to the matter field $\Psi$ seems to have not received any interpretation in this context. In Gauge Theory, when applied to Quantum Mechanics (Remark~\ref{rem:QM}), it is known that the first variation with respect to the elementary particles matter field leads to the wave equation~\cite{Ble1981}, but what is the interpretation of the equation $\delta_{\Psi} \mL=0$ in present relativistic elastodynamics?

In this section, we shall answer this question, by showing that the variational equation $\delta_{\Psi} \mL=0$ leads to the usual configurational forces balance, at the classical limit. An intermediate step to establish this result will be the formulation of relativistic configurational forces balance in Minkowski spacetime, that is in Special Relativity.

To achieve this goal, the first step is to introduce an observer matter field $\Psi_{0}$, as detailed in \autoref{subsec:observer-matter-field}, and compatible with a static spacetime structure defined by a Killing vector $\MM = -\gradg \hat{t}$, where $\hat{t}$ is the time function. This allows us to define the space $\espace$ as the quotient space of the universe $\mM$ by the flow of $\MM$ and to introduce the projection map $\pi \colon \mM \to \espace$ onto this space. Under certain hypotheses (see \autoref{subsec:static-spacetimes}), it is possible to trivialize the universe as the global spacetime $\RR \times \espace$, and to regard $\espace$ as the affine space of Continuum Mechanics; a point $m \in \mM$ corresponds then to a couple $(t,\xx)$, where $\xx \in \espace$.

Since $\Psi_{0}$ is invariant under the flow of $\MM$ (see \autoref{subsec:observer-matter-field}), it induces a diffeomorphism $\overline{\Psi}_{0} \colon \espace \to V$, where $V$ is the 3D space which labels particles. Its inverse
\begin{equation*}
  \pp_{0} := {\overline{\Psi}_{0}}^{-1} \colon \body \subset V \to \Omega_{0} \subset \espace
\end{equation*}
is the relativistic analog of a reference configuration in Classical Continuum Mechanics (\autoref{sec:classical-elastodynamics}). It allows us to redefine the matter field $\Psi\colon \mM \to V$ as a map $\widetilde{\Psi}\colon \mM \to \espace$, the \emph{effective matter field} (see \autoref{subsec:effective-matter-field}), where $\widetilde{\Psi} := \pp_{0} \circ \Psi$, and $\pp_{0}$ is considered as a parameter.

\begin{rem}
  From the point of view of variational calculus, the substitution $\Psi \mapsto \widetilde{\Psi}$ is just a change of variable. However, $V$ was assumed to be a vector space and $\Psi$, a vector valued function, but $\espace$ is, \textit{a priori}, just a manifold. However, the additional assumption that $\pp_{0} \colon V \to \espace$ is a diffeomorphism, allows us to pushforward the vectorial structure of $V$ onto $\espace$, eliminating this difficulty.
  Hence, $\widetilde{\Psi}$ can also be considered as a vector valued function. This is fortunate because variational calculus is more involving for maps which are not vector valued function or sections of a vector bundle.
\end{rem}

We can thus formulate two variational problems
\begin{enumerate}
  \item $\delta_{\Psi} \mL = 0$, where $\mL[g, \Psi] = \mH[g] + \int L(g, \Psi,T\Psi)\vol_{g}$, with Noether stress--energy tensor
        \begin{equation*}
          \bTN{\Psi} := (T\Psi)^{\star} \cdot \pd{L}{T\Psi} - L\, \Idq^{\star},
        \end{equation*}
  \item and, $\delta_{\widetilde{\Psi}} \widetilde{\mL} = 0$, where $\widetilde{\mL} = \mH[g] + \widetilde{\mL}[g, \widetilde{\Psi}; \pp_{0}] =
          \int \widetilde L(g, \widetilde\Psi,T\widetilde\Psi)\vol_{g}$, with Noether stress--energy tensor
        \begin{equation*}
          \bTN{\widetilde{\Psi}} := (T\widetilde{\Psi})^{\star} \cdot \pd{\widetilde{L}}{T\widetilde{\Psi}} - \widetilde{L}\, \Idq^{\star}.
        \end{equation*}
\end{enumerate}

Notice that
\begin{equation*}
  \widetilde{L}(g(m),\widetilde{\Psi}(m),T_{m}\widetilde{\Psi}) = L(g(m),\Psi(m),T_{m}\Psi), \qquad \forall m \in \mM,
\end{equation*}
or more simply
\begin{equation}\label{eq:Ltilde}
  \widetilde{L}(g,\widetilde{\Psi},T\widetilde{\Psi}) = L(g,\Psi,T\Psi),
  \qquad \text{on $\mM$},
\end{equation}
with $\widetilde{\Psi} = \pp_{0} \circ \Psi$. We deduce, then, that
\begin{equation*}
  T\widetilde{\Psi} = \bF_{0}. T\Psi, \quad \text{and} \quad \pd{\widetilde{L}}{T\widetilde{\Psi}} = \bF_{0}^{-\star}. \pd{L}{T\Psi},
\end{equation*}
where $\bF_{0}=T\pp_{0}$. Therefore, the two Noether stress--energy tensors are equal,
\begin{equation*}
  \bTN{\widetilde{\Psi}} =  \bTN{\Psi},
\end{equation*}
and
\begin{equation}\label{eq:equivB}
  \dive^{g} \bTN{\widetilde{\Psi}} = 0 \iff  \dive^{g}  \bTN{\Psi} = 0.
\end{equation}
This shows that the two variational problems are equivalent (see \autoref{sec:variational-calculus}).

\subsection{Relativistic Eshelby tensor}

We shall now reformulate the variational problem~(2) using the change of variable given by the four-dimensional deformation~\eqref{eq:defPhi}
\begin{equation*}
  \Phi \colon  \RR \times \Omega_{0} \to \RR \times \espace,
  \qquad
  m_{0} \mapsto m =\Phi(m_{0}).
\end{equation*}
We get
\begin{equation*}
  \widetilde{\mL}[g, \widetilde{\Psi}; \pp_{0}] = \int \widetilde{L}(g,\widetilde{\Psi},T\widetilde{\Psi})(m) \, \vol_{g}
  = \int L_{0}(m_{0}, \Phi(m_{0}),T_{m_{0}}\Phi) \, \vol_{g},
\end{equation*}
where
\begin{equation*}
  L_{0}(m_{0}, \Phi(m_{0}),T_{m_{0}}\Phi) := J^{g}_{\Phi}(m_{0}) \, \widetilde{L}(g(\Phi(m_{0})),\widetilde{\Psi}(\Phi(m_{0})),T_{\Phi(m_{0})}\widetilde{\Psi}),
\end{equation*}
and the Jacobian $J^{g}_{\Phi}$ is defined intrinsically by the relation $\Phi^{*}\vol_{g} = J^{g}_{\Phi} \, \vol_{g}$ ($J^{g}_{\Phi}$ depends on $g\circ \Phi$ and $T\Phi$).

Now, since $\widetilde{\Psi} \circ \Phi = \pi$, where $\pi \colon \mM \to \espace$ is the projection map $m \mapsto \xx$, we get more precisely
\begin{equation}\label{eq:L0expl}
  L_{0}(m_{0}, \Phi(m_{0}),T_{m_{0}}\Phi) = J^{g}_{\Phi}\big(g(\Phi(m_{0})), T_{m_{0}}\Phi\big) \, \widetilde{L}(g(\Phi(m_{0})),\pi(m_{0}),T_{m_{0}}\pi.(T_{m_{0}}\Phi)^{-1}),
\end{equation}
and one has then
\begin{equation*}
  \pd{L_{0}}{T\Phi} = -J^{g}_{\Phi}(T\Phi)^{-\star}\,(T\pi)^{\star}\left(\pd{\widetilde{L}}{T\widetilde{\Psi}}\circ \Phi\right)(T\Phi)^{-\star} + L_{0}\,(T\Phi)^{-\star}.
\end{equation*}
We have thus shown that the Noether stress-energy tensor for the deformation $\Phi$,
\begin{equation}\label{eq:NoetherPhi}
  \bTN{\Phi} := (T\Phi)^{\star} \pd{L_{0}}{T\Phi} - L_{0}\,\Idq^{\star},
\end{equation}
has for expression
\begin{equation*}
  \bTN{\Phi} = -J^{g}_{\Phi}\, (T\pi)^{\star}\left(\pd{\widetilde{L}}{T\widetilde{\Psi}}\circ \Phi\right)(T\Phi)^{-\star},
\end{equation*}
and using the identity $T\widetilde{\Psi} \circ \Phi = T\pi.(T\Phi)^{-1}$, we get
\begin{equation*}
  \bTN{\Phi} = -J^{g}_{\Phi} \, (T\Phi)^{\star}\left( (T\widetilde{\Psi})^{\star}\circ ^{}\Phi\right)\left(\pd{\widetilde{L}}{T\widetilde{\Psi}}\circ \Phi\right)(T\Phi)^{-\star}.
\end{equation*}
Therefore, we get finally
\begin{equation}\label{eq:TPhi}
  \bTN{\Phi} = -J^{g}_{\Phi} (T\Phi)^{\star} \left(\bTN{\widetilde{\Psi}} \circ \Phi\right) (T\Phi)^{-\star} - L_{0}\, \Idq^{\star},
\end{equation}
which we shall interpret as the (four-dimensional) \emph{relativistic Eshelby tensor}.

In this context, we shall introduce, as in Classical Continuum Mechanics, the (four-dimensional) \emph{relativistic first Piola--Kirchhoff tensor}
\begin{equation}\label{eq:relativistic-first-PK}
  \bPK := -\pd{L_{0}}{T\Phi} = J^{g}_{\Phi} \left( \bTN{\Psi} \circ \Phi\right) (T\Phi)^{-\star},
\end{equation}
and the (four-dimensional) \emph{relativistic Mandel tensor}
\begin{equation}\label{eq:relativistic-Mandel}
  \bMan := (T\Phi)^{\star}\, \bPK = J^{g}_{\Phi}\, (T\Phi)^{\star} \left( \bTN{\Psi} \circ \Phi\right) (T\Phi)^{-\star} = J_{\Phi} \,\Phi^{*}\,\bTN{\Psi} .
\end{equation}
We get then the (four-dimensional) \emph{relativistic Eshelby  tensor}
\begin{equation}\label{eq:relativistic-Eshelby}
  \bEsh := \bTN{\Phi} = (T\Phi)^{\star} \pd{L_{0}}{T\Phi} - L_{0}\,\Idq^{\star} =  -(\bMan + L_{0}\, \Idq^{\star}).
\end{equation}

\subsection{Formulation of configurational forces balance in Minkowski spacetime}

From now on, we assume that the static spacetime under consideration is the Minkowski spacetime, the one of Special Relativity. We get thus that the Universe is $\mM = \RR^{4}$ endowed with the flat metric
\begin{equation*}
  g = \eta = -c^{2} \rd t^{2}+\bq.
\end{equation*}
The quotient space $\espace$ is just $\RR^{3}$ and $\bar{g} = \bq$ is the Euclidean metric. The trivialization~\eqref{eq:universe-trivialization} is then an isometry. We will work in the global coordinate canonical system of $\RR^{4}$ for which the components of the Minkowski metric $\eta$ are constant,
\begin{equation*}
  \eta =
  \begin{bmatrix}
    -c^{2} & 0 & 0 & 0 \\
    0      & 1 & 0 & 0 \\
    0      & 0 & 1 & 0 \\
    0      & 0 & 0 & 1
  \end{bmatrix}.
\end{equation*}

In this context, we have the \emph{four-dimensional Piola identity}
\begin{equation*}
  \dive^{\eta} \bPK = J_{\Phi}\left(\dive^{\eta} \bTN{\widetilde{\Psi}}\right)\circ\Phi,
\end{equation*}
where $J_{\Phi}:=J^{\eta}_{\Phi}$,
and the  \emph{four-dimensional Mandel identity}
\begin{equation*}
  \dive^{\eta} \bMan = \dive^{\eta} \left((T\Phi)^{\star}\, \bPK\right) = (T\Phi)^{\star} \dive^{\eta} \bPK + \bPK:\nabla T\Phi
\end{equation*}
where
\begin{equation*}
  (\dive^{\eta}\bPK)_{\mu} =\tensor{P}{_{\mu}^{\beta}_{,\beta}}, \quad \text{and} \quad (\dive^{\eta}\bMan)_{\alpha} = \tensor{M}{_{\alpha}^{\beta}_{,\beta}}.
\end{equation*}
The proof of these identities will be omitted, since they are similar to the ones in 3D Classical Continuum Mechanics (see \cite{MT1992,Mau1993}).

Note furthermore that the Lagrangian density $L_{0}$ depends explicitly on $m_{0}$, only through $\xx_{0}$ but not through time $t$, because $\pi(m_{0}) = \xx_{0}$. Moreover, since $g(\Phi(m_{0})) = \eta$ is constant in our setting, $L_{0}$ does not depend on the punctual value of the 4D deformation~$\Phi$, but only on $T\Phi$. Therefore, we have
\begin{align*}
  \dive^{\eta} \bMan & = \dive^{\eta}(\bMan + L_{0}\, \Idq^{\star}) - \rd L_{0}
  \\
                     & = - \dive^{\eta} \bTN{\Phi} - \pd{L_{0}}{(t, \xx_{0})} - \pd{L_{0}}{T\Phi} : \nabla T\Phi
  \\
                     & = - \dive^{\eta} \bTN{\Phi} - \pd{L_{0}}{(t, \xx_{0})} + \bPK : \nabla T\Phi,
\end{align*}
so that
\begin{equation*}
  \dive^{\eta} \bTN{\Phi} + \pd{L_{0}}{(t, \xx_{0})} =  \bPK : \nabla T\Phi - \dive^{\eta} \bMan = - (T\Phi)^{\star} \dive^{\eta} \bPK = -J_{\Phi}\, (T\Phi)^{\star}\left(\dive^{\eta} \bTN{\widetilde{\Psi}}\right)\circ\Phi.
\end{equation*}

In conclusion, we get the equivalence
\begin{equation*}
  \dive^{\eta}\bTN{\tilde \Psi}=0 \iff  \dive^{\eta} \bTN{\Phi} + \pd{L_{0}}{(t, \xx_{0})} = 0,
\end{equation*}
where, since $\bEsh:=\bTN{\Phi}$, the equality
\begin{equation}\label{eq:relConfig}
  \dive^{\eta} \bEsh+ \pd{L_{0}}{(t, \xx_{0})} = 0
\end{equation}
is interpreted as the relativistic balance for configurational forces, with here
\begin{equation*}
  \pd{L_{0}}{(t, \xx_{0})} =
  \begin{bmatrix}
    0 & \displaystyle \pd{L_{0}}{\xx_{0}}
  \end{bmatrix},
\end{equation*}
since $L_{0}$ does not depend on time.

By \eqref{eq:relativistic-Eshelby} and \eqref{eq:TPhiblock}, the relativistic Eshelby tensor can be expressed as in Classical Continuum Mechanics (see \autoref{subsec:classical-configurational-forces}) as
\begin{equation}\label{eq:Esh4D}
  \bEsh = \bTN{\Phi} =
  \begin{bmatrix}
    \displaystyle \frac{\partial L_{0}}{\partial (\partial_{t}\phi)} . \partial_{t}\phi - L_{0}
     & \displaystyle \partial_{t}\phi  .\displaystyle\pd{L_{0}}{\bF_{\phi}}
    \\
    \bF_{\phi}^{\star} \cdot \displaystyle \pd{L_{0}}{(\partial_{t}\phi)}
     & \displaystyle \bF_{\phi}^{\star} \cdot \pd{L_{0}}{\bF_{\phi}} - L_{0}\,\Idt^{\star}
  \end{bmatrix}
\end{equation}
with the same formal expressions for the balance equations as in \autoref{subsec:Euler-law} and remark~\ref{rem:Esh4Dclassic}.

Finally, if the Lagrangian $\mL$ is moreover general covariant, then, by remark~\ref{rem:deltaPsi}, the relativistic configurational forces balance~\eqref{eq:relConfig} is equivalent to the conservation law $\dive^{\eta} \bTH=0$ for Hilbert's stress energy tensor. This implies the balance of linear momentum in Classical Continuum Mechanics \eqref{eq:MMC-Euler--Lagrange-bis} as the classical limit $c\to \infty$ \cite{Sou1964,KD2023}.

\section{The relativistic Eshelby tensor in Special Relativity and its classical limit}
\label{sec:classical-limit}

In this final section, we detail the relativistic formulation of the configurational forces balance in the particular case of the static Minkowski spacetime (example~\ref{ex:Minkowski}, $g=\eta$, $\mg=\bq$), and show that its classical limit corresponds to the well-known expressions of \autoref{subsec:classical-configurational-forces}.

In that case, the 3D quotient spatial space is the affine Euclidean space $(\espace, \bq)$ and the trivialization map~\eqref{eq:universe-trivialization} is given by
\begin{equation*}
  \btau \colon \mM \to \RR \times \espace, \qquad m \mapsto (t, \xx),
\end{equation*}
with $t = \hat{t}(m)$ and $\xx = \pi(m)$, where $\pi\colon \mM \to \espace$ is the projection map. In this trivialization, the relativistic deformation~\eqref{eq:defPhi} is expressed as
\begin{equation*}
  \Phi\colon \RR\times \Omega_{0} \to \RR \times \espace, \qquad (t, \xx_{0}) \mapsto (t, \xx = \phi(t,\xx_{0})).
\end{equation*}

The relativistic mass density \eqref{eq:relativistic-mass-density} recasts as
\begin{equation}\label{eq:rhoMink}
  \rho:= \rho_{r} \gamma =\frac{\rho_{r} }{\sqrt{1-\frac{\norm{\uu}_{\bq}^{2}}{c^{2}}}},
\end{equation}
where  $\uu$ is the (spatial) Eulerian velocity (such as $\uu \circ \Phi=\partial_{t} \phi$), and $\norm{\uu}_{\bq}$ is its Euclidean norm.
Letting $\pp_{0}\colon \body \to \espace$ be the reference embedding, recall that the equality
$(\pp_{0})_{*}\mu = \rho_{0} \vol_{\bq}$
defines the reference mass density $\rho_{0}=\rho_{0}(\xx_{0})$.
By lemma \ref{lem:mass-conservation}, setting
\begin{equation*}
  J_{\Phi}:=J_{\Phi}^{\eta}=\det T\Phi,
\end{equation*}
mass conservation on $\Omega_{0}\subset \espace$ takes the form $\left(\rho \circ \Phi\right) J_{\Phi} = \rho_{0} \circ \pi$.

\subsection{The relativistic Eshelby tensor of a general covariant matter Lagrangian}
\label{subsec:covariant-Lagrangian}

We shall assume now that the matter Lagrangian density $\widetilde L(\eta, \widetilde{\Psi}, T\widetilde{\Psi})$ is general covariant. Then, according to Souriau's theorem~\ref{thm:Souriau}, it recasts as a function $L^{gc}(\widetilde{\Psi}, \widetilde{\bK})$ of the effective matter field $\widetilde{\Psi}$ and the effective conformation
\begin{equation*}
  \widetilde{\bK} = (T\widetilde{\Psi})\, \eta^{-1} (T\widetilde{\Psi})^{\star},
  \qquad
  \widetilde{\bK} \circ \Phi
  =  \bF_{\phi}^{-1} \Big(\bq^{-1}-  \frac{1}{c^{2}} \VV \otimes \VV \Big)\bF_{\phi}^{-\star}.
\end{equation*}

In General Relativity, mass and energy are mixed together and the contribution of mass to the total energy if of order $c^{2}$. This leads to some difficulties when one tries to derive, from General Relativity to Classical Mechanics, balance equations involving energy. In~\cite{Sou1958,DeW1962}, the following Lagrangian density has been formulated
\begin{equation}\label{eq:LtildeRG}
  \widetilde L(\eta, \widetilde{\Psi}, T\widetilde{\Psi}) = \rho_{r} c^{2} + \rho_{r} w,
\end{equation}
where $\rho_{r}=\rho_{r}(\widetilde{\Psi}, \widetilde{\bK})$ is the rest mass density and $w =w(\widetilde{\Psi}, \widetilde{\bK})$ is the (specific) strain energy density of perfect (hyperelastic) matter. We deduce then, by \eqref{eq:L0expl}, lemma~\ref{lem:mass-conservation} and the definition \eqref{eq:rhoMink} of the relativistic mass density, that
\begin{equation*}
  L_{0}(\xx_{0}, \partial_{t} \phi, \bF_{\phi}) = \rho_{0}(\xx_{0}) \, \sqrt{1-\dfrac{1}{c^{2}}\norm{\partial_{t} \phi}_{\bq}^{2}}
  \left[c^{2}+ w \Big( \xx_{0}, \bF_{\phi}^{-1} \Big(\bq^{-1}-  \frac{1}{c^{2}} \partial_{t} \phi \otimes \partial_{t} \phi \Big)\bF_{\phi}^{-\star} \Big)\right],
\end{equation*}
where $\bV=\partial_{t} \phi$ is the relativistic Lagrangian velocity, and $\bF_{\phi}=T\phi_{t}$ is the linear tangent map of the three-dimensional relativistic deformation $\phi_{t}$.

Introducing the \emph{relativistic kinematic energy density},
\begin{equation}\label{eq:Krelat}
  K(\xx_{0}, \partial_{t} \phi) := \rho_{0}c^{2} \left(1-\sqrt{1-\dfrac{1}{c^{2}}\norm{\partial_{t} \phi}_{\bq}^{2}}\right) =  \frac{1}{2} \rho_{0} \norm{\VV}^{2}_{\bq} + O\Big( \frac{1}{c^{2}}\Big),
\end{equation}
and the \emph{relativistic strain energy density}
\begin{equation}\label{eq:Wrelat}
  \begin{aligned}
    W(\xx_{0}, \partial_{t} \phi, \bF_{\phi}) & := \rho_{0}\, w \Big( \xx_{0}, \bF_{\phi}^{-1} \Big(\bq^{-1}-  \frac{1}{c^{2}} \partial_{t} \phi \otimes \partial_{t} \phi \Big)\bF_{\phi}^{-\star} \Big) \sqrt{1-\dfrac{1}{c^{2}}\norm{\partial_{t} \phi}_{\bq}^{2}}
    \\
                                              & = \rho_{0}\, w \left( \xx_{0}, (\bF_{\phi}^{\star}\,\bq\, \bF_{\phi})^{-1}\right)+ O\Big( \frac{1}{c^{2}}\Big),
  \end{aligned}
\end{equation}
where we have assumed that $w=O(1)$ for perfect matter, the Lagrangian density $L_{0}$ recasts as
\begin{equation*}
  L_{0} = \rho_{0} c^{2} - K + W.
\end{equation*}
The problem is that this quantity diverges as $c \to \infty$. We shall thus prefer to substitute to the original Lagrangian density, the following one
\begin{equation*}
  L_{0} \mapsto \rho_{0} c^{2} - L_{0} = K - W,
\end{equation*}
which is moreover in accordance with the usual definition of a Lagrangian density in Classical Mechanics. Note that,
\begin{equation*}
  \int_{\Omega_{0}} L_{0} \vol_{\bq} =  m c^{2} - \int_{\Omega_{0}} \left(K - W\right) \vol_{\bq},
\end{equation*}
where $m$ (the total mass of the considered continuous medium) is a constant. Hence the quantity $\rho_{0} c^{2}$ does not interfere in the variational calculus. This redefinition of the Lagrangian density is possible in any static spacetime, using the formulation of mass conservation provided in \autoref{subsec:mass-conservation}.

\begin{rem}
  There is a rate effect, related to the  \emph{relativistic length contraction} in $1/c^{2}$, on the strain energy density $W$ (that $W\to 0$ when $\norm{\VV}_{\bq} \to c$), due to the Lorentz factor and the definition of the inverse of the relativistic right Cauchy--Green tensor,
  \begin{equation*}
    \bC^{-1} =\bF_{\phi}^{-1} \Big(\bq^{-1}-  \frac{1}{c^{2}} \VV \otimes \VV \Big)\bF_{\phi}^{-\star}.
  \end{equation*}
\end{rem}

\begin{rem}
  The fact that $L_{0}$ does not depend explicitly on the deformation $\phi$ itself, and thus that $\partial L_{0}/\partial \phi=0$, derives from the fact that we were able to find a coordinate system in which the components of the metric are constant functions. This is a consequence of Riemann theorem when the metric is flat and is particular to Special Relativity. It's consequence, in Classical Mechanics, when taking the limit $c \to \infty$, is the existence of so-called \emph{Galilean frames}.
\end{rem}

Using this choice of Lagrangian density $L_{0} := K - W$, we introduce the following quantities.
\begin{align*}
  E_{0}    & :=  \frac{\partial L_{0}}{\partial (\partial_{t}\phi)} \cdot \partial_{t}\phi - L_{0} =\bp \cdot \VV - K+W
           &                                                                                                                                                             & \text{(Relativistic energy density)},
  \\
  \bp      & := \pd{L_{0}}{(\partial_{t}\phi)}= \pd{K}{(\partial_{t}\phi)}- \pd{W}{(\partial_{t}\phi)}
           &                                                                                                                                                             & \text{(Relativistic linear momentum density)},
  \\
  \hat \bP & := - \pd{L_{0}}{\bF_{\phi}}=\pd{W}{\bF_{\phi}}
           &                                                                                                                                                             & \text{(Relativistic first Piola-Kirchhoff stress tensor)},
  \\
  \VV      & := \partial_{t} \phi
           &                                                                                                                                                             & \text{(Relativistic Lagrangian velocity)},
  \\
  \hat \bB & :=  \displaystyle \bF_{\phi}^{\star} \cdot \pd{L_{0}}{\bF_{\phi}} - L_{0}\,\Idt^{\star}= (W-K)\,\Idt^{\star}- \displaystyle \bF_{\phi}^{\star}  \, \hat \bP
           &                                                                                                                                                             & \text{(Relativistic Eshelby stress tensor)}.
\end{align*}
They allow us to recasts the \emph{relativistic Eshelby tensor} \eqref{eq:relativistic-Eshelby} as the time-space block expression
\begin{equation*}
  \bEsh  =  \begin{bmatrix}
    E_{0}                    & -\VV \cdot  \hat \bP
    \\
    \;  \bp \cdot \bF_{\phi} &
    \hat \bB
  \end{bmatrix}.
\end{equation*}

\begin{rem}
  The 3D (mixed) second order stress tensor $\hat \bB \colon T^{\star} \Omega_{0}~\to~T^{\star} \Omega_{0}$ is the relativistic generalization of the classical Eshelby stress tensor \eqref{eq:B3D-classical}.
\end{rem}

The expression~\eqref{eq:relConfig} of the relativistic balance for configurational forces recasts, as in remark~\ref{rem:Esh4Dclassic}, as
\begin{equation*}
  \pd{(K-W)}{(t, \xx_{0})}+ \dive^{\eta}  \bEsh=0,
\end{equation*}
where $\dive^{\eta} = \dive^{4D}$ in the Cartesian coordinate system $(t,X^{I})$ on the reference configuration system $\Omega_{0}$, uses both in Special Relativity and in Classical Mechanics.

The three-dimensional energy balance and configurational forces balance
\begin{equation*}
  \dd{E_{0}}{t} = \dive^{\bq} (\VV \cdot \hat \bP), \qquad \frac{\partial (K-W)}{\partial \xx_{0}} + \dive^{\bq}\, \hat \bB + \dd{}{t} \left(\bp \cdot \bF_{\phi}\right)=0,
\end{equation*}
have thus the same form in Special Relativity Hyperelasticity and in Classical Hyperelasticity~\eqref{eq:ConfigBalance}.

\begin{rem}
  In Special Relativity, the variational equations $(a)$ with respect to the reference configuration $\pp_{0}$, $\delta_{\pp_{0}} \mL=0$, and $(b)$ with respect to the effective matter field, $\delta_{\widetilde \Psi} \mL=0$, are equivalent, since the formal calculations of \autoref{subsec:classical-configurational-forces} still hold.
\end{rem}

\subsection{Explicit calculation of the components of $\bEsh$ and of their classical limit}

Let
\begin{equation*}
  \bC:= \widetilde \bK^{-1} \circ \Phi= \bF_{\phi}^{\star}\left(\bq^{-1}-  \frac{1}{c^{2}} \VV \otimes \VV \right)^{-1}\bF_{\phi}
\end{equation*}
be the generalized right Cauchy-Green tensor \eqref{eq:genC} and $\VV^{\flat}:=\eta \VV=\bq \VV$ be the covector Lagrangian velocity. We have moreover
\begin{align*}
  \pd{}{(\partial_{t}\phi)}  \sqrt{1-\dfrac{1}{c^{2}}\norm{\partial_{t} \phi}_{\bq}^{2}}
   & =-\frac{1}{c^{2}}\, \frac{\VV^{\flat}}{\sqrt{1-\frac{\norm{\VV}_{\bq}^{2}}{c^{2}}}},
  \\
  \pd{}{(\partial_{t}\phi)} w \Big( \xx_{0}, \bC^{-1} \Big)
   & =- \frac{2}{c^{2}}\,\VV\cdot\bF_{\phi}^{-\star}\pd{w}{\bC^{-1}} \,\bF_{\phi}^{-1},
  \\
  \pd{}{\bF_{\phi}} w \Big( \xx_{0}, \bC^{-1} \Big)
   & = -2\, \bF_{\phi}^{-\star} \pd{w}{\bC^{-1}} \,\bC^{-1},
\end{align*}
the first two converging to zero at $c\to \infty$. We deduce from these properties the following results.

\begin{itemize}
  \item The relativistic generalization of the first Piola-Kirchhoff stress tensor is
        \begin{equation*}
          \hat \bP = \pd{W}{\bF_{\phi} }
          = -2\rho_{0}  \sqrt{1- \frac{1}{c^{2}}\norm{\VV}^{2}_{\bq}}\;\, \bF_{\phi}^{-\star} \pd{w}{\bC^{-1}} \,\bC^{-1}.
        \end{equation*}
        This 3D constitutive equation is interpreted as the Relativistic Hyperelasticity law (as already noticed it is rate dependent by a relativistic effect, in $1/c^{2}$).
        At infinite speed of light, it converges to the Classical Hyperelasticity law
        \begin{equation*}
          \lim_{c\to \infty} \hat \bP = \pd{W}{\bF_{\phi} }
          = -2\rho_{0}\,  \bF_{\phi}^{-\star} \pd{w}{\bC^{-1}} \,\bC^{-1},
        \end{equation*}
        where $  \bC= \bF_{\phi}^{\star} \,\bq\,  \bF_{\phi}$ at the classical limit.

  \item The relativistic linear momentum density is
        \begin{equation*}
          \bp = \pd{(K-W)}{(\partial_{t}\phi)}
          = \left(1+\frac{W}{\rho_{0} c^{2}\sqrt{1 - \frac{\norm{\VV}^{2}_{\bq}}{c^{2}}}}\right) \frac{ \rho_{0} \VV^{\flat}}{\sqrt{1 - \frac{\norm{\VV}^{2}_{\bq}}{c^{2}}}}
          - \frac{1}{c^{2}}\, \VV \cdot \hat \bP\, \bC\,\bF_{\phi}^{-1},
        \end{equation*}
        since $W$ depends on the Lagrangian velocity $\VV = \partial_{t} \phi$ and $\bC\to \bF_{\phi}^{\star} \,\bq\,  \bF_{\phi}$, $\bp$ converges to the classical expression $\rho_{0} \VV^{\flat}$ at $c\to \infty$.

  \item The relativistic energy density is
        \begin{equation*}
          E_{0} = \bp \cdot \VV - K+W = \frac{K}{\sqrt{1 - \frac{\norm{\VV}^{2}_{\bq}}{c^{2}}}}
          + \frac{W}{1 - \frac{\norm{\VV}^{2}_{\bq}}{c^{2}}}
          - \frac{1}{c^{2}}\, \VV \cdot \hat \bP\, \bC\,\bF_{\phi}^{-1} \VV.
        \end{equation*}
        Setting $\bC = \bF_{\phi}^{\star} \,\bq\,  \bF_{\phi}$, and since then
        \begin{equation*}
          \lim_{c\to \infty} K = \frac{1}{2} \lim_{c\to \infty} (\bp\cdot \VV) = \frac{1}{2} \rho_{0} \norm{\VV}_{\bq}^{2},
          \qquad
          \lim_{c\to \infty} W = \rho_{0} w \big( \xx_{0}, \bC^{-1} \big) ,
        \end{equation*}
        the relativistic energy density has for classical limit
        \begin{equation*}
          \lim_{c\to \infty} E_{0} = \lim_{c\to \infty} K+\lim_{c\to \infty} W = \frac{1}{2} \rho_{0} \norm{\VV}_{\bq}^{2}+\rho_{0} w \big( \xx_{0}, \bC^{-1} \big).
        \end{equation*}

  \item The relativistic expression for $\hat \bB$ is formally the same as in Classical Continuum Mechanics, namely
        \begin{equation*}
          \hat \bB = (W-K)\,\Idt^{\star} - \displaystyle \bF_{\phi}^{\star} \, \hat \bP,
          \qquad
          \lim_{c\to \infty} \hat \bB = \left( \lim_{c\to \infty} W - \lim_{c\to \infty} K\right) \Idt^{\star} - \displaystyle \bF_{\phi}^{\star} \, \lim_{c\to \infty} \hat \bP,
        \end{equation*}
        with the above classical limits for $W$, $K$ and $\hat \bP$.
\end{itemize}

\begin{rem}
  Dust is the particular case $L=\rho_{r} c^{2}$ ($w=0$, $L_{0}=K$), for which
  \begin{equation*}
    E_{0} = \frac{K}{\sqrt{1- \frac{\norm{\VV}^{2}_{\bq}}{c^{2}}}} = \frac{1}{2} \rho_{0} \norm{\VV}_{\bq}^{2} + O\left(\frac{1}{c^{2}}\right),
    \qquad
    \bp = \frac{ \rho_{0} \VV^{\flat}}{\sqrt{1 - \frac{\norm{\VV}^{2}_{\bq}}{c^{2}}}} = \rho_{0}  \VV^{\flat} + O\left(\frac{1}{c^{2}}\right).
  \end{equation*}
  The relativistic first Piola Kirchhoff stress tensor vanishes ($\hat \bP = 0$), and the relativistic Eshelby stress tensor
  \begin{equation*}
    \hat \bB = -K\, \Idt^{\star}=-\rho_{0}c^{2} \left(1 - \sqrt{1 - \frac{\norm{\VV}_{\bq}^{2}}{c^{2}}}\right)\, \Idt^{\star}
    = -\frac{1}{2} \rho_{0} \norm{\VV}_{\bq}^{2}\, \Idt^{\star} + O\left(\frac{1}{c^{2}}\right)
  \end{equation*}
  represents purely kinematic energy density.
\end{rem}

\section*{Conclusion}

This article proposes a unified and rigorous theoretical framework for understanding and deriving configurational forces by placing them at the core of a geometric variational formulation inspired by General Relativity. Long regarded as conceptually subtle objects in Continuum Mechanics, configurational forces are clarified here through an intrinsic approach in which the material body is modeled as an abstract manifold, independent of its spatial configurations. Within the classical three-dimensional framework of hyperelasticity, we show that the balance of configurational forces naturally follows from the stationarity of the Lagrangian with respect to variations of the reference configuration. This approach highlights that the configurational force balance is not a new governing equation, but rather a reformulation of the balance of linear momentum combined with the constitutive relations, leading in a consistent manner to the Eshelby stress tensor defined on the material configuration. The main contribution of the paper lies in the extension of these results to the four-dimensional relativistic setting. Relying on a Generally Covariant formulation of matter inspired by Souriau and on the variational principles of General Relativity, the article establishes a deep correspondence between variations with respect to the matter field and variations with respect to the metric of the Universe. This yields an equivalence between the Hilbert and Noether energy–momentum tensors and provides a relativistic interpretation of the Eshelby tensor as a natural object associated with material symmetries. The introduction of an observer allows one to connect this relativistic formalism with Classical Mechanics, by identifying the non-relativistic limits and demonstrating the continuity between the classical three-dimensional and relativistic four-dimensional formulations. In this way, the balance of configurational forces emerges as a particular case, in the sense of a classical limit, of a more general relativistic conservation law. In conclusion, this work offers a unifying perspective on configurational forces, showing that they fully belong to the general framework of variational principles and conservation laws, both in Classical Mechanics and in Relativity. This approach clarifies their conceptual status, strengthens their mathematical foundations, and opens the way for future developments in the study of defects, interfaces, and complex materials within broader geometric and relativistic frameworks.

\appendix

\section{First order variational calculus and stress--energy tensors}
\label{sec:variational-calculus}

In variational calculus, one starts with a Lagrangian
\begin{equation*}
  \mL (\psi) = \int \omega(\psi),
\end{equation*}
where $\psi$ is a vector valued function from a $d$-dimensional manifold $M$ to some $r$-dimensional vector space $V$ (or more generally a section of a vector bundle above $M$ of rank $r$) and $\omega$ is a differential form of degree $d$ defined on $M$ and depending on $\psi$. The domain of integration is not specified. Usually, the dependence on $\psi$ is local, meaning that $\omega$ is only a function of the $k$-jet of $\psi$. In the following we restrict to the case where $k=1$. Hence, given some local coordinates $(x^{\mu})$ on $M$, defined on some chart $U\subset \RR^{d}$ and where $(X^{I})$ are linear coordinates on $V$, the problem recasts as
\begin{equation*}
  \mL_{U} (\psi) = \int_{U} L\left(\xx,\psi^{I}(\xx),\partial_{\nu}\psi^{I}(\xx)\right)\, \rd x^{1} \wedge \dotsb \wedge \rd x^{d}.
\end{equation*}
Here, the Lagrangian density $L = L(x^{\mu}, \psi^{I},{\psi^{I}}_{,\nu})$ is just the component of $\omega$ in the chart $(x^{\mu})$. The critical points of $\mL$ are, by definition, the functions $\psi$ such that
\begin{equation*}
  \left.\frac{\rd}{\rd\epsilon}\right|_{\varepsilon = 0}  \mL_{U} (\psi + \varepsilon \delta\psi) = 0,
\end{equation*}
for all charts $U$ and variations $\delta\psi$ with compact support in $U$. They are the solutions of the Euler--Lagrange equations
\begin{equation}\label{eq:EL-equations}
  \text{EL}_{\psi} := \pd{L}{\psi} - \dive\left(\pd{L}{T\psi}\right) = 0,
\end{equation}
where, in components,
\begin{equation*}
  \dive\left(\pd{L}{T\psi}\right)_{I} := D_{\mu} \left(\pd{L}{T\psi}\right)\indices{_{I}^{\mu}} = D_{\mu} \left(\pd{L}{(T\psi)^{I}_{\mu}}\right), \qquad 1 \le I \le r,
\end{equation*}
and $D_{\mu}$ is the \emph{total derivative} (see~\cite[chapter 4]{Olv1993}) defined as
\begin{equation*}
  D_{\mu} \left(\pd{L}{(T\psi)^{I}_{\nu}}\right) := \pd{}{x^{\mu}} \left(\pd{L}{(T\psi)^{I}_{\nu}}\left(\xx,\psi^{I}(\xx),\partial_{\alpha}\psi^{I}(\xx)\right)\right).
\end{equation*}

The conservation of the total energy density for a one-dimensional problem ($d=1$) extends for higher dimension problems, once one introduces the \emph{Noether stress--energy tensor}
\begin{equation*}
  \TNcomp{\psi}\indices{_{\mu}^{\nu}} := (T\psi)\indices{^{I}_{\mu}}\left(\pd{L}{T\psi}\right)\indices{_{I}^{\nu}} - L \delta\indices{_{\mu}^{\nu}}.
\end{equation*}
Indeed, we have the following result.

\begin{thm}\label{thm:Noether-SE-tensor-conservation}
  Let
  \begin{equation}\label{eq:Noether-SE}
    \bTN{\psi}:= (T\psi)^{\star} \cdot \pd{L}{T\psi} - L\,\bI^{\star}
  \end{equation}
  be the Noether stress--energy tensor and suppose that $\psi$ satisfies the Euler--Lagrange equations~\eqref{eq:EL-equations}. Then, we get
  \begin{equation}\label{eq:div-TN}
    \dive \bTN{\psi} =  - \pd{L}{\xx}.
  \end{equation}
  If moreover, $\psi$ is a submersion, then, \eqref{eq:EL-equations} and \eqref{eq:div-TN} are equivalent.
\end{thm}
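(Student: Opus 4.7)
The plan is to compute $\dive \bTN{\psi}$ directly in local coordinates, to identify the Euler--Lagrange expression as a natural factor of the result, and to invoke the rank of $T\psi$ for the converse. No conceptual difficulty should arise; the whole argument is a careful chain rule.

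First I would apply the total divergence $D_\nu$ to each summand of $\TNcomp{\psi}\indices{_\mu^\nu}$. The Leibniz rule on $(T\psi)\indices{^I_\mu}(\partial L/\partial T\psi)\indices{_I^\nu}$ produces two pieces: a second-derivative contribution $(T\psi)\indices{^I_{\mu,\nu}}(\partial L/\partial T\psi)\indices{_I^\nu}$ and a term $(T\psi)\indices{^I_\mu}\, D_\nu(\partial L/\partial T\psi)\indices{_I^\nu}$. The second summand contributes $D_\nu(L\delta\indices{_\mu^\nu}) = D_\mu L$.

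Next I would expand $D_\mu L$ by the chain rule applied to the arguments $(\xx,\psi,T\psi)$ of $L$:
\begin{equation*}
  D_\mu L = \pd{L}{x^\mu} + \pd{L}{\psi^I}(T\psi)\indices{^I_\mu} + \left(\pd{L}{T\psi}\right)\indices{_I^\alpha} (T\psi)\indices{^I_{\alpha,\mu}}.
\end{equation*}
By the Schwarz symmetry $(T\psi)\indices{^I_{\alpha,\mu}} = (T\psi)\indices{^I_{\mu,\alpha}}$, the last contribution above cancels exactly with the second-derivative piece from the first step (after relabeling the dummy index). What remains reads
\begin{equation*}
  D_\nu \TNcomp{\psi}\indices{_\mu^\nu}
  = (T\psi)\indices{^I_\mu}\left[D_\nu \left(\pd{L}{T\psi}\right)\indices{_I^\nu} - \pd{L}{\psi^I}\right] - \pd{L}{x^\mu}
  = -(T\psi)\indices{^I_\mu}\, (\text{EL}_\psi)_I - \pd{L}{x^\mu},
\end{equation*}
i.e.\ intrinsically $\dive \bTN{\psi} = -(T\psi)^\star \cdot \text{EL}_\psi - \partial L/\partial \xx$. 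The forward implication is then immediate: whenever $\text{EL}_\psi = 0$, we recover $\dive \bTN{\psi} = -\partial L/\partial \xx$.

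For the converse under the submersion assumption, the key remark is linear-algebraic. If $T_m\psi \colon T_mM \to V$ is surjective at every $m$ (so that $d \geq r$), then its transpose $(T_m\psi)^\star \colon V^\star \to T_m^\star M$ is injective. Hence the identity $(T\psi)^\star \cdot \text{EL}_\psi = 0$ forces $\text{EL}_\psi = 0$ pointwise. The only step requiring care is bookkeeping: the providential cancellation in the chain rule relies on identifying the correct summation index and on the convention that $\bTN{\psi}$ carries the divergence on its second leg, so that $(\partial L/\partial T\psi)\indices{_I^\nu}$ is differentiated along $\nu$ rather than $\mu$.
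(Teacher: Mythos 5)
Your computation is correct and coincides with the paper's own proof: the same Leibniz expansion of $D_{\nu}\TNcomp{\psi}\indices{_{\mu}^{\nu}}$, the same chain-rule expansion of $D_{\mu}L$ with cancellation of the second-derivative terms, and the same injectivity argument for $(T\psi)^{\star}$ in the converse. Nothing to add.
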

\begin{proof}
  We have
  \begin{align*}
    (\dive \bTN{\psi})_{\mu} & = D_{\nu} \TNcomp{\psi}\indices{_{\mu}^{\nu}}                                                                                                                                                                              \\
                             & = \partial_{\nu}(T\psi)\indices{^{I}_{\mu}}\left(\pd{L}{T\psi}\right)\indices{_{I}^{\nu}} + (T\psi)\indices{^{I}_{\mu}} D_{\nu} \left(\pd{L}{T\psi}\right)\indices{_{I}^{\nu}} - (D_{\nu}L)\, \delta\indices{_{\mu}^{\nu}}
    \\
                             & = \frac{\partial^{2}\psi^{I}}{\partial x^{\nu} \partial x^{\mu}}\left(\pd{L}{T\psi}\right)\indices{_{I}^{\nu}} + (T\psi)\indices{^{I}_{\mu}}  \dive\left(\pd{L}{T\psi}\right)_{I} - D_{\mu}L,
  \end{align*}
  with
  \begin{equation*}
    D_{\mu}L = \pd{L}{x^{\mu}} + \pd{L}{\psi^{I}} \pd{\psi^{I}}{x^{\mu}} + \left(\pd{L}{T\psi}\right)\indices{_{I}^{\nu}} \frac{\partial^{2}\psi^{I}}{\partial x^{\mu} \partial x^{\nu}}.
  \end{equation*}
  We get thus
  \begin{align*}
    (\dive \bTN{\psi})_{\mu} & = (T\psi)\indices{^{I}_{\mu}}  \dive\left(\pd{L}{T\psi}\right)_{I} - \pd{L}{x^{\mu}} - \pd{L}{\psi^{I}} \pd{\psi^{I}}{x^{\mu}}
    \\
                             & = (T\psi)\indices{^{I}_{\mu}} \left\{ \dive\left(\pd{L}{T\psi}\right)_{I} - \pd{L}{\psi^{I}} \right\} - \pd{L}{x^{\mu}},
  \end{align*}
  which proves that \eqref{eq:div-TN} is satisfied if \eqref{eq:EL-equations} is satisfied. Suppose now that $\psi$ is a submersion and that \eqref{eq:div-TN} is satisfied, then we have
  \begin{equation*}
    (T\psi)\indices{^{I}_{\mu}} \left\{ \dive\left(\pd{L}{T\psi}\right)_{I} - \pd{L}{\psi^{I}} \right\} = 0,
  \end{equation*}
  which recasts as
  \begin{equation*}
    (T\psi)^{\star} \cdot \left\{\dive\left(\pd{L}{T\psi}\right) - \pd{L}{\psi}\right\} = 0.
  \end{equation*}
  But, if $T\psi$ is surjective, then $(T\psi)^{\star}$ is injective and we get thus that
  \begin{equation*}
    \dive\left(\pd{L}{T\psi}\right) - \pd{L}{\psi} = 0,
  \end{equation*}
  which achieves the proof.
\end{proof}

\begin{rem}
  Beware that with these definitions, in a change of chart $\varphi \colon U \to \tilde{U}$, $\text{EL}_{\psi}$ changes according to the rule $\text{EL}_{\psi} = (\det \varphi)\, \text{EL}_{\widetilde{\psi}}$ (see~\cite[Theorem 4.8]{Olv1993}. Similarly, $\bTN{\psi}$ is not really a tensor but a tensor-density.
\end{rem}

When the manifold $M$ is equipped with a given volume form, for instance the Riemannian volume form $\vol_{g}$ of a (pseudo) Riemannian metric $g$, we can rewrite globally
\begin{equation*}
  \omega(\psi) = L(\psi) \, \vol_{g},
\end{equation*}
where, this time, $L$ is a true function (not the component of a $d$-form, defined only locally). In that case, the Euler--Lagrange equations recasts as
\begin{equation}\label{eq:metric-EL-equations}
  \text{EL}_{\psi} := \pd{L}{\psi} - \dive^{g}\left(\pd{L}{T\psi}\right) = 0,
\end{equation}
where $\dive^{g}$ is the Riemannian divergence and the corresponding Noether stress--energy tensor is a true tensor field this time.

\section{Details of the calculations}
\label{sec:calculation-details}

\begin{lem}\label{lem:var-volume}
  Let $\pp_{0}\colon \body \to \espace$ a reference configuration. Then
  \begin{equation*}
    \delta ({\pp_{0}}^{\! \! *}(\vol_{\bq})) = {\pp_{0}}^{\! \! *}\big((\dive \uu) \vol_{\bq}\big),
  \end{equation*}
  where $\uu := \delta\pp_{0}\circ {\pp_{0}}^{-1}$.
\end{lem}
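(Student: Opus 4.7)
The strategy is to convert the variation of the pullback ${\pp_{0}}^{*}\vol_{\bq}$ into a Lie derivative along the vector field $\uu$ on $\Omega_{0}$, and then invoke the standard identity $\mathcal{L}_{\uu} \vol_{\bq} = (\dive \uu)\vol_{\bq}$. Concretely, I would pick a smooth one-parameter family $\pp_{0}(\varepsilon) \in \Emb(\body,\espace)$ with $\pp_{0}(0) = \pp_{0}$ and $\partial_{\varepsilon}\pp_{0}(\varepsilon)\vert_{\varepsilon=0} = \delta\pp_{0}$, so that $\uu = \delta\pp_{0} \circ \pp_{0}^{-1}$ is a vector field defined along the embedded submanifold $\Omega_{0} = \pp_{0}(\body) \subset \espace$. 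The map $\delta({\pp_{0}}^{*}\vol_{\bq}) = \partial_{\varepsilon}\vert_{\varepsilon=0}\pp_{0}(\varepsilon)^{*}\vol_{\bq}$ is then by definition a 3-form on $\body$.

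To turn this variation into a Lie derivative, I extend $\uu$ to a smooth vector field on a tubular neighborhood of $\Omega_{0}$ in $\espace$ (such extensions exist for any smooth vector field along an embedded submanifold), and let $\psi_{\varepsilon}$ denote its local flow, so that $\psi_{0} = \id$ and $\psi_{\varepsilon} \circ \pp_{0}$ agrees with $\pp_{0}(\varepsilon)$ to first order at $\varepsilon = 0$. Since the variation only depends on this first-order information, one obtains
\begin{equation*}
    \delta({\pp_{0}}^{*}\vol_{\bq}) = \left.\frac{\rd}{\rd\varepsilon}\right|_{\varepsilon=0}\pp_{0}^{*}(\psi_{\varepsilon}^{*}\vol_{\bq}) = \pp_{0}^{*}\left(\left.\frac{\rd}{\rd\varepsilon}\right|_{\varepsilon=0}\psi_{\varepsilon}^{*}\vol_{\bq}\right) = \pp_{0}^{*}(\mathcal{L}_{\uu}\vol_{\bq}),
\end{equation*}
the last equality being the definition of the Lie derivative.

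The proof is then completed by invoking $\mathcal{L}_{\uu}\vol_{\bq} = (\dive \uu)\vol_{\bq}$, which follows from Cartan's magic formula $\mathcal{L}_{\uu} = \rd \circ i_{\uu} + i_{\uu} \circ \rd$ together with $\rd \vol_{\bq} = 0$ and the very definition of the divergence, $\rd(i_{\uu}\vol_{\bq}) = (\dive \uu)\vol_{\bq}$. The only mildly delicate point is the extension of $\uu$ to a neighborhood of $\Omega_{0}$, but this is a standard consequence of the tubular neighborhood theorem and is in any case bypassable: one may alternatively verify the statement in local coordinates by writing ${\pp_{0}}^{*}\vol_{\bq} = \sqrt{\det(q \circ \pp_{0})}\,\det(T\pp_{0})\,\rd X^{1} \wedge \rd X^{2} \wedge \rd X^{3}$ and combining the Jacobi formula $\delta \det(T\pp_{0}) = \det(T\pp_{0})\tr((T\pp_{0})^{-1}\delta T\pp_{0})$ with the variation of $\sqrt{\det(q \circ \pp_{0})}$, recovering the same coordinate expression for $\dive \uu$.
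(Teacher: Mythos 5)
Your proof is correct. The one genuinely delicate step --- replacing the given variation $\pp_{0}(\varepsilon)$ by the flow $\psi_{\varepsilon}$ of an extension of $\uu$ --- is justified exactly as you say: the pullback $\pp^{*}\vol_{\bq}$ depends pointwise only on the first jet of $\pp$, so its first variation depends only on the first-order data $(\pp_{0},\delta\pp_{0})$, which the two curves share; and since $\Omega_{0}=\pp_{0}(\body)$ is a full-dimensional compact domain with boundary, the extension of $\uu$ is standard (a collar of $\partial\Omega_{0}$ does the job) and $\dive\uu$ on $\Omega_{0}$ is independent of the choice of extension.

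The route is nevertheless different from the paper's. The paper stays entirely on the body $\body$: it sets $\bgamma_{0}:={\pp_{0}}^{\!*}\bq$, so that ${\pp_{0}}^{\!*}\vol_{\bq}=\vol_{\bgamma_{0}}$, applies the metric-variation formula $\delta\vol_{\bgamma_{0}}=\tfrac{1}{2}\tr(\bgamma_{0}^{-1}\delta\bgamma_{0})\,\vol_{\bgamma_{0}}$, and then identifies $\delta\bgamma_{0}={\pp_{0}}^{\!*}(\Lie_{\uu}\bq)$ with $\Lie_{\uu}\bq=\bq(\nabla\uu+(\nabla\uu)^{t})$, whose half-trace is $\dive\uu$. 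You instead push the Lie derivative onto the volume form itself, $\delta({\pp_{0}}^{\!*}\vol_{\bq})={\pp_{0}}^{\!*}(\Lie_{\uu}\vol_{\bq})$, and conclude by Cartan's formula. What each buys: the paper's argument needs no extension or flow (everything is an algebraic identity once $\delta\bgamma_{0}$ is computed) and exhibits the variation of the induced metric, an object of independent interest in this intrinsic Lagrangian setting; yours is shorter and more elementary in that it bypasses the formula for $\delta\vol_{\bgamma}$ in terms of $\delta\bgamma$, at the cost of the extension/flow step (which your coordinate computation with Jacobi's formula indeed renders dispensable). Both hinge on the same underlying fact that the first variation is a Lie derivative along $\uu$.
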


\begin{proof}
  Set $\bgamma_{0} := {\pp_{0}}^{\! \! *}\bq$. We have thus ${\pp_{0}}^{\! \! *}(\vol_{\bq}) = \vol_{\bgamma_{0}}$, and we get
  \begin{align*}
    \delta ({\pp_{0}}^{\! \! *}(\vol_{\bq})) & = \delta (\vol_{\bgamma_{0}}) = \frac{1}{2} \tr({\bgamma_{0}}^{-1}\delta \bgamma_{0}) \vol_{\bgamma_{0}},
  \end{align*}
  whereas
  \begin{equation*}
    \delta \bgamma_{0} = {\pp_{0}}^{\! \! *}(\Lie_{\uu}\bq), \quad \text{with} \quad \Lie_{\uu}\bq = \bq( \nabla \uu + (\nabla \uu)^{t}),
  \end{equation*}
  and $\uu := \delta\pp_{0}\circ {\pp_{0}}^{-1}$. We have therefore
  \begin{equation*}
    \tr({\bgamma_{0}}^{-1}\delta \bgamma_{0}) = \tr \big({\pp_{0}}^{\! \! *}(\bq^{-1}) \, {\pp_{0}}^{\! \! *}(\Lie_{\uu}\bq)\big)
    = {\pp_{0}}^{\! \! *} \tr \big(\nabla \uu + (\nabla \uu)^{t}\big) = 2\, {\pp_{0}}^{\! \! *} (\dive \uu) .
  \end{equation*}
  We get thus finally
  \begin{equation*}
    \delta ({\pp_{0}}^{\! \! *}(\vol_{\bq})) = {\pp_{0}}^{\! \! *} (\dive \uu) \,  \vol_{\bgamma_{0}} = {\pp_{0}}^{\! \! *}\big((\dive \uu) \vol_{\bq}\big).
  \end{equation*}
\end{proof}

\textbf{Proof of formula \eqref{eq:varphi}.}
We have
\begin{equation*}
  \delta_{\phi}\mL = \int_{t_{0}}^{t_{1}} \left( \int_{\Omega_{0}} \left(\pd{L_{0}}{\phi}\cdot \delta \phi + \pd{L_{0}}{\bF_{\phi}} :\rdx \delta \phi + \pd{L_{0}}{(\partial_{t}\phi)} \cdot \partial_{t} \delta \phi \right)\vol_{\bq}\right) \rd t .
\end{equation*}
But
\begin{equation*}
  \pd{L_{0}}{\bF_{\phi}} :\rdx \delta \phi = - \dive\left(\pd{L_{0}}{\bF_{\phi}}\right)\cdot \delta \phi + \dive\left(\delta \phi\cdot \pd{L_{0}}{\bF_{\phi}}\right),
\end{equation*}
or, in components,
\begin{equation*}
  {\left(\pd{L_{0}}{\bF_{\phi}}\right)_{i}}^{J} { \delta \phi^{i}}_{,J} = - {{\left(\pd{L_{0}}{\bF_{\phi}}\right)_{i}}^{J}}_{,J} \delta \phi^{i} + \left(\delta \phi^{i} {\left( \pd{L_{0}}{\bF_{\phi}}\right)_{i}}^{J} \right)_{,J}.
\end{equation*}
We also have
\begin{equation*}
  \pd{L_{0}}{(\partial_{t}\phi)} \cdot\partial_{t} \delta \phi = - \dd{}{t}\left(\pd{L_{0}}{(\partial_{t}\phi)}\right) \cdot\delta \phi + \dd{}{t}\left(\pd{L_{0}}{(\partial_{t}\phi)} \cdot\delta \phi\right),
\end{equation*}
where the total derivative means the time derivative along the path $\tilde\phi=\phi(t)$. Therefore, using Stokes theorem for the integration by part of the spatial term and Fubini theorem (commutation of the time integral with the spatial integral) for the time part, we get~\eqref{eq:varphi}.

\medskip

\textbf{Proof of formula \eqref{eq:p0}.}
We have
\begin{multline*}
  \delta_{\pp_{0}}\mL = \int_{t_{0}}^{t_{1}} \left( \int_{\body} \left[ \left(\pd{L_{0}}{\xx_{0}} + \pd{L_{0}}{\phi} \cdot \bF_{\phi}  + \pd{L_{0}}{\bF_{\phi}} :\rdx\bF_{\phi} + \pd{L_{0}}{(\partial_{t}\phi)} \cdot (\partial_{t}\bF_{\phi}) \right)\circ {\pp_{0}} \right] \cdot \delta \pp_{0} \,({\pp_{0}}^{\! \! *}\vol_{\bq})\right) \rd t
  \\
  + \int_{t_{0}}^{t_{1}} \left(\int_{\body} L_{0}\circ {\pp_{0}} \, \delta({\pp_{0}}^{\! \! *}\vol_{\bq}) \right) \rd t .
\end{multline*}
But, using lemma~\ref{lem:var-volume}, we get
\begin{equation*}
  \delta ({\pp_{0}}^{\! \! *}(\vol_{\bq})) = {\pp_{0}}^{\! \! *}\big(\dive (\delta\pp_{0}\circ {\pp_{0}}^{-1}) \, \vol_{\bq}\big),
\end{equation*}
and hence
\begin{align*}
  \delta_{\pp_{0}}\mL & = \int_{t_{0}}^{t_{1}} \left( \int_{\body} \left[\left(\pd{L_{0}}{\xx_{0}} + \pd{L_{0}}{\phi} \cdot\bF_{\phi} + \pd{L_{0}}{\bF_{\phi}} :\rdx\bF_{\phi} + \pd{L_{0}}{(\partial_{t}\phi)} \cdot(\partial_{t}\bF_{\phi}) \right)\circ {\pp_{0}} \right] \cdot \delta \pp_{0} \,({\pp_{0}}^{\! \! *}\vol_{\bq}) \right) \rd t
  \\
                      & \quad + \int_{t_{0}}^{t_{1}} \left( \int_{\body} {\pp_{0}}^{\! \! *}\big( L_{0} \,\dive (\delta\pp_{0}\circ {\pp_{0}}^{-1}) \, \vol_{\bq} \big) \right) \rd t
  \\
                      & = \int_{t_{0}}^{t_{1}} \left( \int_{\Omega_{0}} \left(\pd{L_{0}}{\xx_{0}} + \pd{L_{0}}{\phi} \cdot \bF_{\phi} + \pd{L_{0}}{\bF_{\phi}} :\rdx\bF_{\phi} + \pd{L_{0}}{(\partial_{t}\phi)} \cdot(\partial_{t}\bF_{\phi})  \right) \cdot \delta \pp_{0} \circ {\pp_{0}}^{-1} \, \vol_{\bq}\right) \rd t
  \\
                      & \quad + \int_{t_{0}}^{t_{1}} \left( \int_{\Omega_{0}} L_{0} \, \dive (\delta\pp_{0}\circ {\pp_{0}}^{-1}) \, \vol_{\bq} \right) \rd t .
\end{align*}
Now, using the fact that
\begin{align*}
  L_{0} \, \dive (\delta\pp_{0}\circ {\pp_{0}}^{-1}) & = - \rd L_{0} \cdot (\delta\pp_{0}\circ {\pp_{0}}^{-1}) + \dive \left(L_{0} \,\delta\pp_{0}\circ {\pp_{0}}^{-1} \right)
  \\
                                                     & = - \dive \left(L_{0}\, \Idt^{\star} \right) (\delta\pp_{0}\circ {\pp_{0}}^{-1}) + \dive \left(L_{0} \,\delta\pp_{0}\circ {\pp_{0}}^{-1} \right),
\end{align*}
and thanks to Stokes theorem, we get finally~\eqref{eq:p0}.


\end{document}